\newtheorem{thm}{Theorem}
\newtheorem{definition}{Definition}
\newtheorem{lemma}{Lemma}
\newtheorem{proposition}{Proposition}
\begin{document}
\title{On Design of Distributed Beamforming for Two-Way Relay Networks}
%
\author{Meng Zeng, Rui Zhang, and Shuguang Cui
\thanks{Meng Zeng and Shuguang Cui are with the Department of Electrical and Computer
Engineering, Texas A\&M University, College Station, TX, 77843. Emails:
\{zengm321, cui\}@tamu.edu.}
\thanks{Rui Zhang is with the Institute for Infocomm Research, A*STAR,
Singapore and the Department of Electrical and Computer Engineering,
National University of Singapore. Email: rzhang@i2r.a-star.edu.sg or
elezhang@nus.edu.sg.}}

\maketitle

\begin{abstract}
We consider a two-way relay network, where two source nodes, S1 and S2,
exchange information through a cluster of relay nodes. The relay nodes
receive the sum signal from S1 and S2 in the first time slot. In the
second time slot, each relay node multiplies its received signal by a
complex coefficient and retransmits the signal to the two source nodes,
which leads to a distributed two-way beamforming system. By applying
the principle of analog network coding, each receiver at S1 and S2
cancels the ``self-interference'' in the received signal from the relay
cluster and decodes the message. This paper studies the 2-dimensional
achievable rate region for such a two-way relay network with
distributed beamforming. With different assumptions of channel
reciprocity between the source-relay and relay-source channels, the
achievable rate region is characterized under two setups. First, with
reciprocal channels, we investigate the achievable rate regions when
the relay cluster is subject to a sum-power constraint or
individual-power constraints. We show that the optimal beamforming
vectors obtained from solving the weighted sum inverse-SNR minimization
(WSISMin) problems are sufficient to characterize the corresponding
achievable rate region. Furthermore, we derive the closed form
solutions for those optimal beamforming vectors and consequently
propose the partially distributed algorithms to implement the optimal
beamforming, where each relay node only needs the local channel
information and one global parameter. Second, with the non-reciprocal
channels, the achievable rate regions are also characterized for both
the sum-power constraint case and the individual-power constraint case.
Although no closed-form solutions are available under this setup, we
present efficient algorithms to compute the optimal beamforming
vectors, which are attained by solving a sequence of semi-definite
programming (SDP) problems after semi-definite relaxation (SDR).

\end{abstract}
\begin{keywords}
Two-Way Relay, Distributed Beamforming, Achievable Rate Region, Pareto
Optimal, Semi-definite Programming (SDP), Semi-definite Relaxation
(SDR).
\end{keywords}
\section{Introduction}
\label{sec:intro}Cooperative communication has been extensively studied
in past years, where various cooperative relaying schemes have been
proposed, such as amplify-and-forward (AF) \cite{AF},
decode-and-forward (DF) \cite{DF}, compress-and-forward (CF) \cite{CF},
and coded-cooperation \cite{codecooper}. Among these schemes, due to
its simplicity, the AF-based relaying is of the most practical
interest, where multi-antenna relay beamforming has also been explored
to achieve higher spatial diversity \cite{Khoshnevis08}. In certain
resource constrained networks, such as sensor networks, the node size
is limited such that each node could only mount a single antenna
\cite{VMIMO}. In order to exploit the multi-antenna gain in such
size-limited cases, distributed relay beamforming strategies have been
developed where the relaying nodes cooperate to generate a beam towards
the receiver under sum or individual power constraints
\cite{Nassab08}\cite{NT_BF}.

As an extension to the AF-based one-way relaying scheme, the AF-based
two-way relaying scheme \cite{Rui09} is based on the principle of
analog network coding (ANC) \cite{ANC} to support communications in two
directions. Traditionally, two-way relaying avoids the simultaneous
transmissions of two source terminals, and requires four time-slots to
finish one round of information exchange between them. On the contrary,
the two-way relaying scheme proposed in \cite{ANC} allows the relay to
mix the data and amplify-and-forward it, where the two terminals
exploit the underlying self-interference structure. By doing so, the
amount of required transmission time-slots is reduced from four to two
and the overall network throughput is thus improved. There are several
other works discussing such two-way relay systems. In particular, the
authors in \cite{Rui09} characterized the maximum achievable rate
region for the two-way relay beamforming scheme by assuming a single
relay node equipped with multiple antennas and two source nodes each
equipped with a single antenna. As a counterpart of the work in
\cite{Rui09}, the decode-and-forward two-way relaying has been studied
in \cite{DFtwoway}. The authors in \cite{DistBF} studied the AF-based
two-way relay with distributed beamforming, where the focus is to
minimize the total transmit power of the source nodes and the relay
cluster under a given pair of signal-to-noise ratio (SNR) constraints.

The works on characterizing the rate region of two-way relaying has
also been done in \cite{VazetwowayBF}, where authors considered the
distributed beamforming case. However, all of existing works only
obtained numerical solutions. The missing of closed-form solutions
leads to difficulties in designing efficient algorithms due to the lack
of insight into the structure of the optimal beamforming vectors.
Thereby, in this paper, we try to seek the closed-form solutions for
the optimal beamforming vectors to characterize the maximum achievable
rate region and correspondingly propose efficient distributed
algorithms. Our work differs from the work in \cite{Rui09} from two
main aspects. First, we assume a cluster of single-antenna relay nodes
and consider \emph{distributed} two-way relay beamforming rather than
the multiple-antenna single-relay beamforming. Due to the distributed
feature, we will study the case where each relay node has an individual
power constraint in addition to the case where all relay nodes are
subject to a sum-power constraint. Second, we present closed-form
solutions for the optimal beamforming vectors when we have reciprocal
channels.

The rest of the paper is organized as follows. In Section
\ref{sec:system model}, we introduce the system model and define the
achievable rate region. In Sections \ref{sec: recip} and \ref{sec:
nonrecip}, we characterize the achievable rate regions with two
different assumptions on the channel reciprocity. Sub-optimal schemes
with lower complexity are discussed in Section \ref{sec: subopt}.
Numerical results are presented in Section \ref{sec:results} with
conclusions in Section \ref{sec: conclusion}.

\emph{Notations}: We use uppercase bold letters to denote matrices and
lowercase bold letters to denote vectors. The conjugate, transpose, and
Hermitian transpose are denoted by $(\cdot)^*$, $(\cdot)^{T}$, and
$(\cdot)^{H}$, respectively. The phase of a complex variable $a$ is
denoted as $\angle a$. We use $\text{tr}(\cdot)$ and
$\text{rank}(\cdot)$ to represent the trace and the rank of a matrix,
respectively. A diagonal matrix with the elements of vector
$\mathbf{a}$ as diagonal entries is denoted as
$\text{diag}(\mathbf{a})$. $\mathbf{A}\succeq 0$ means $\mathbf{A}$ is
positive semi-definite, $\mathbf{a}\succeq \mathbf{b}$ means $a_i \geq
b_i$ component-wise, and $\odot$ stands for the Hadamard (elementwise)
multiplication.

\section{System Model}\label{sec:system model}
\begin{figure}
 \begin{centering}
  \includegraphics[width=0.45\textwidth]{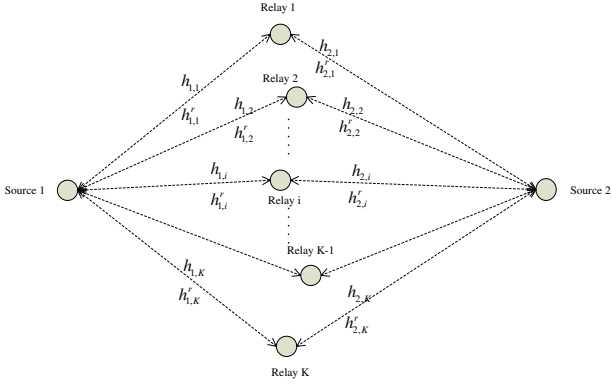}\\
  \caption{System model}\label{fig:sysmodel}
  \end{centering}
\end{figure}
As shown in Fig. \ref{fig:sysmodel}, we consider a distributed two-way
relay system consisting of two source nodes S1 and S2, each with a
single antenna, and a relay cluster with $K$ single-antenna relay nodes
$R_i$'s, $i=1,\cdots,K$. No direct links between S1 and S2 exist. The
forward channels from S1 and S2 to relay node $i$ are denoted as
$h_{1,i}$ and $h_{2,i}$, respectively, while $h_{1,i}^{r}$ and
$h_{2,i}^{r}$ denote the backward channels from relay node $i$ to S1
and S2, respectively.  All the involved channels are assumed to take
complex values and remain constant during one operation period.  In
addition, all channel state information is revealed to S1, S2, and the
design/control center where the beamforming solution is solved.

The two-way relaying takes two consecutive equal-length time-slots to
finish one round of communication between S1 and S2 via the relay
cluster with perfect synchronization assumed among S1, S2, and
$R_i,~i=1,\cdots,K$. In the first time-slot, S1 and S2 transmit their
signals simultaneously to the relay cluster; the $i$-th relay node
receives the mixed signal $t_{i}(n)$, which is expressed as
\begin{equation}\label{eq:rec signal}
    t_{i}(n)=h_{1,i}s_{1}(n)+h_{2,i}s_{2}(n)+v_{i}(n),
\end{equation}
where $s_{1}(n)$ and $s_{2}(n)$ are the transmitted symbols at time
index $n$; and $v_{i}(n)$ is the receiver noise at relay node $i$,
which is assumed to be circularly symmetric complex Gaussian (CSCG)
with zero mean and variance $\sigma_i^2$. In the second time-slot, upon
receiving the mixed signal, relay node $i$ multiplies a complex
coefficient $w_i$ and forwards the signal, which is given as
$u_{i}(n)=w_{i}t_{i}(n)$. At the source node terminals,  S1 and S2
receive the sum signals from all the relay nodes, which are
respectively given as
\begin{eqnarray}
  y_1(n) &=&\sum_{i=1}^{K} h_{1,i}^{r}u_{i}(n)+z_1(n),  \\
  y_2(n) &=&\sum_{i=1}^{K} h_{2,i}^{r}u_{i}(n)+z_2(n),
\end{eqnarray}
where $z_1(n)$ and $z_2(n)$ are the noises at S1 and S2, respectively,
which are assumed to be CSCG with zero mean and variances
$\sigma_{S1}^2$ and $\sigma_{S2}^2$, respectively. Since S1 and S2 know
their own transmitted signals, $s_1(n)$ and $s_2(n)$, respectively,
they could subtract the resulting self-interference terms
$\sum_{i=1}^{K} h_{1,i}^{r}w_{i}h_{1,i}s_1(n)$ and $\sum_{i=1}^{K}
h_{2,i}^{r}w_{i}h_{2,i}s_2(n)$ from the received signals, respectively.
Accordingly, the remaining signals for S1 and S2 are
\begin{eqnarray}
  \tilde{y}_{1}(n) &=& \sum_{i=1}^{K}\left[h_{1,i}^{r}w_ih_{2,i}s_2(n)+h_{1,i}^{r}w_{i}v_{i}(n)\right]+z_1(n), \\
  \tilde{y}_{2}(n) &=&  \sum_{i=1}^{K}\left[h_{2,i}^{r}w_ih_{1,i}s_1(n)+h_{2,i}^{r}w_{i}v_{i}(n)\right]+z_2(n).
\end{eqnarray}
Therefore, for a given $\mathbf{w}=[w_1,\cdots,w_K]^T$ the maximum
achievable rates for the end-to-end link from S2 to S1 and from S1 to
S2 are respectively given as
\begin{eqnarray}
    R_{1}&=&\frac{1}{2}\log_2\left(1+\frac{P_{S2} |\mathbf{f}^{T}_{2}\mathbf{w}|^2}{\sigma_{S1}^2+\mathbf{w}^{H}\mathbf{A}_1\mathbf{w}}\right),\label{eq: r1}\\
    R_{2}&=&\frac{1}{2}\log_2\left(1+\frac{P_{S1}  |\mathbf{ f }^{T}_{1}\mathbf{w}|^2}{\sigma_{S2}^2+\mathbf{w}^{H}\mathbf{A}_2\mathbf{w}}\right),\label{eq: r2}
\end{eqnarray}
where $\mathbf{f}_1=\mathbf{h}_1\odot \mathbf{h}_2^{r}$,
$\mathbf{f}_2=\mathbf{h}_2 \odot \mathbf{h}_{1}^{r}$ with
$\mathbf{h}_i=[h_{i,1},\cdots,h_{i,K}]^T$ and
$\mathbf{h}_{i}^r=[h_{i,1}^r,\cdots,h_{i,K}^r]^T, i=1,2$. In addition,
$\mathbf{A}_1
=\text{diag}[|h_{1,1}^{r}|^2\sigma_{1}^2,\cdots,|h_{1,K}^{r}|^2\sigma_{K}^2]$,
$\mathbf{A}_{2}=\text{diag}[|h_{2,1}^{r}|^2\sigma_{1}^2,\cdots,|h_{2,K}^{r}|^2\sigma_{K}^2]$,
$P_{S1}$ and $P_{S2}$ are the maximum transmit powers at S1 and S2,
respectively, and the factor $1/2$ is due to the use of two orthogonal
time-slots for relaying.

Accordingly, we can define the set of rate pairs achievable by all
feasible beamforming vector $\mathbf{w}$'s as
\begin{equation}\label{eq: achievable rate}
     \mathcal{R}=\mathop {\bigcup} \limits_{\mathbf{w}\in \Omega_{w} } \{(r_1,r_2): r_1\leq R_{1}, r_2\leq
    R_{2}\},
\end{equation}
where the feasible set $\Omega_{w}$ can be defined by either a
sum-power constraint or individual-power constraints. Specifically,
when the sum-power constraint is considered, we have $\Omega_w
=\{\mathbf{w}:p_R(\mathbf{w})\leq P_R\}$, where $P_R$ is a scalar power
limit and $p_R(\mathbf{w})$ is the sum-power of the relay cluster given
the beamforming vector $\mathbf{w}$. When individual-power constraints
are considered, we have $\Omega_w
=\{\mathbf{w}:\mathbf{p}_R(\mathbf{w})\preceq \mathbf{P}_R\}$, where
$\mathbf{p}_R(\mathbf{w})$ is a vector of individual transmit powers,
$\mathbf{P}_R$ is a vector with its elements denoting the power
constraints for individual relay nodes, and $\preceq$ is element-wise.

When time-sharing between different achievable rate pairs is
considered, the achievable rate region is then defined as the convex
hull over the set of $\mathcal{R}$.
\begin{definition}
The achievable rate region $\mathcal{O}$ is the convex hull over the
set of achievable rate pairs $\mathcal{R}$, i.e.,
\begin{equation}
    \mathcal{O}=H_\text{cvx}(\mathcal{R}),
\end{equation}
where $H_\text{cvx}(\cdot)$ is the convex hull operation.
\end{definition}

The goal of this paper is to efficiently characterize the achievable
rate region $\mathcal{O}$. According to different assumptions on the
channel reciprocity between the forward and backward channels, we first
study the reciprocal case, and then study the non-reciprocal case.

\section{Reciprocal Channel Case}\label{sec: recip}
In this section, we assume that the forward channels from each source
node to the relay nodes are reciprocal to the backward channels from
the relay nodes to each corresponding source node, i.e.,
$h_{1,i}=h_{1,i}^r$ and $h_{2,i}=h_{2,i}^r$, for $i=1,\cdots,K$, which
usually holds for a time-division-duplex (TDD) relaying system. In this
case, it is obvious that when $\angle \mathbf{w}=-(\angle
\mathbf{h}_{1}+ \angle \mathbf{h}_{2})$, both rates given by (\ref{eq:
r1}) and (\ref{eq: r2}) are maximized for a given set of $|w_i|$'s.
Thus, we only need to further find the optimal amplitudes for the
elements in $\mathbf{w}$. Let $x_i=|w_i|$ and
$\hat{f}_i=|h_{1,i}||h_{2,i}|$; we rewrite (\ref{eq: r1}) and (\ref{eq:
r2}), respectively, as
\begin{eqnarray}
  R_{1} &=& \frac{1}{2} \log_2\left(1+\frac{P_{S2} \mathbf{ (\hat{f}}^{T}\mathbf{x})^2}{\sigma_{S1}^2+\mathbf{x}^{T}\mathbf{A}_1\mathbf{x}}\right),\label{eq: rate1}\\
  R_{2} &=& \frac{1}{2} \log_2\left(1+\frac{P_{S1} \mathbf{ (\hat{f}}^{T}\mathbf{x})^2}{\sigma_{S2}^2+\mathbf{x}^{T}\mathbf{A}_2\mathbf{x}}\right),\label{eq: rate2}
\end{eqnarray}
where $\mathbf{x}=[|w_1|,\cdots,|w_K|]^T$ and
$\mathbf{\hat{f}}=[|h_{1,1}||h_{2,1}|,\cdots,|h_{1,K}||h_{2,K}|]^T$. In
order to obtain $\mathcal{O}$, we need to characterize the Pareto
boundary of $\mathcal{R}$. A common method is via solving a sequence of
weighted sum-rate maximization (WSRMax) problems \cite{Boyd}, each for
a different non-negative weight vector $(\lambda,1-\lambda), 0\leq
\lambda \leq 1$, as follows
\begin{eqnarray}
  \mathop {\max }\limits_{\mathbf{x}} & & \frac{\lambda}{2}\log_2\left(1+\frac{P_{S2} \mathbf{
  (\hat{f}}^{T}\mathbf{x})^2}{\sigma_{S1}^2+\mathbf{x}^{T}\mathbf{A}_1\mathbf{x}}\right)\nonumber\\
   & &+ \frac{1-\lambda}{2}\log_2\left(1+\frac{P_{S1} \mathbf{ (\hat{f}}^{T}\mathbf{x})^2}{\sigma_{S2}^2+\mathbf{x}^{T}\mathbf{A}_2\mathbf{x}}\right)\label{eq: WSRmax} \\
  \text{s.t.} & & \mathbf{x} \in \Omega_w \label{eq: sum-power WSR}.
\end{eqnarray}
Unfortunately, we cannot derive the closed-form solution for the WSRMax
problem. However, from (\ref{eq: rate1}) and (\ref{eq: rate2}), we see
that the received SNRs at S1 and S2 are
\begin{eqnarray}
  \text{SNR}_1 &=& \frac{P_{S2} \mathbf{ (\hat{f}}^{T}\mathbf{x})^2}{\sigma_{S1}^2+\mathbf{x}^{T}\mathbf{A}_1\mathbf{x}}, \label{eq: snr1}\\
  \text{SNR}_2 &=& \frac{P_{S1} \mathbf{
  (\hat{f}}^{T}\mathbf{x})^2}{\sigma_{S2}^2+\mathbf{x}^{T}\mathbf{A}_2\mathbf{x}}\label{eq: snr2},
\end{eqnarray}
respectively, where their numerators differ by only a scalar constant.
As shown later, for each given weight vector $(\mu,1-\mu), 0\leq
\mu\leq 1$, we could easily find a closed-form solution for the
following weighted sum inverse-SNRs minimization(WSISMin) problem
\begin{eqnarray}
  \mathop {\min }\limits_{\mathbf{x}} & & \mu \frac{\sigma_{S1}^2+\mathbf{x}^{T}\mathbf{A}_1\mathbf{x}}{P_{S2}(\mathbf{ \hat{f}}^{T}\mathbf{x})^2}
  +(1-\mu)\frac{\sigma_{S2}^2+\mathbf{x}^{T}\mathbf{A}_2\mathbf{x}}{P_{S1}(\mathbf{ \hat{f}}^{T}\mathbf{x})^2}\\
  \text{s.t.} & & \mathbf{x} \in \Omega_w.
\end{eqnarray}
Hence, we could quantify the Pareto boundary for the inverse-SNR
region. Based on this observation, together with the fact that there
exists a bijective mapping between an inverse-SNR pair and a rate pair,
we are inspired to probe the question on whether we could construct the
achievable rate region $\mathcal{O}$ from the easily obtainable
inverse-SNR region. In the following, we first introduce some
definitions related to the inverse-SNR region and then show that we
indeed can construct the achievable rate region $\mathcal{O}$ from the
inverse-SNR region, based on the set of closed-form solutions for a
sequence of the WSISMin problems.

\subsection{Characterizing the Achievable Rate Region}\label{subsec: IS definition}
At first, we introduce some definitions.
\begin{definition}\label{def: mapping}
Consider a bijective mapping $\mathcal{U}: (x,y) \mapsto
\left(\frac{1}{2}\log_2(1+1/x),\frac{1}{2}\log_2(1+1/y)\right)$ with
$(x,y)\in R_{++}^{2}$; then the set of achievable inverse-SNR pairs
$\mathcal{I}$ is defined as
\begin{equation*}
    \mathcal{I}=\left\{(t_1,t_2):\mathcal{U}(t_1, t_2) \in \mathcal{R}\right\}.
\end{equation*}
\end{definition}

For regions $\mathcal{R}$ and $\mathcal{I}$, we are particularly
interested in their Pareto boundaries, which are defined as follows.
\begin{definition}
The Pareto boundary of $\mathcal{R}$ is defined as
$\mathcal{P}=\{(r_1,r_2): (r_1,r_2)\in \mathcal{R},
((r_1,r_2)+\mathbf{K})\bigcap\mathcal{R}= (r_1,r_2)\}$, and the Pareto
boundary of $\mathcal{I}$ is defined as $\mathcal{B}=\{(t_1,t_2):
(t_1,t_2)\in \mathcal{I}, ((t_1,t_2)-\mathbf{K})\bigcap\mathcal{I}=
(t_1,t_2)\}$, where $\mathbf{K}=\mathbf{R}_{+}^2$ is a non-negative
cone \cite{Boyd}.
\end{definition}
%

\begin{definition}
Define the points obtained by solving the WSISMin problem with a given
weight vector ($\mu$, $\bar{\mu}$) as
\begin{equation}\label{eq:set mu WS}
    \mathcal{S}(\mu,\mathcal{I})=\{(t_1,t_2): \mathop {\min }\limits_{(t_1, t_2)\in \mathcal{I}} \mu t_1+\bar{\mu} t_2 \}.
\end{equation}
where $\bar{\mu}=1-\mu$.
\end{definition}
\begin{definition}
The set of points that can be obtained from a sequence of WSISMin
problems is given as
\begin{equation}\label{eq:set WS}
    \mathcal{S}(\mathcal{I})=\bigcup \limits_{0\leq \mu \leq 1} \mathcal{S}(\mu,\mathcal{I}).
\end{equation}
\end{definition}

In order to show that we can construct $\mathcal{O}$ from
$\mathcal{S}(\mathcal{I})$, where $\mathcal{O}$ can be obtained by
convex hulling over $\mathcal{P}$,  we need to prove two things: (1) A
point in $\mathcal{B}$ could be mapped to a point in $\mathcal{P}$ and
vice versa, which means $\mathcal{U}(\mathcal{B})=\mathcal{P}$; (2) The
points in $\mathcal{B}$ that cannot be obtained by WSISMin are mapped
to the points in $\mathcal{P}$ that are unnecessary for constructing
$\mathcal{O}$ by convex hulling over $\mathcal{P}$, i.e.,
$\mathcal{U}(\mathcal{B}\setminus \mathcal{S}(\mathcal{I})) \subseteq
\mathcal{P}_{non}$, where $\mathcal{P}_{non}$ denotes the points in
$\mathcal{P}$ that are unnecessary for constructing $\mathcal{O}$ with
convex hulling over $\mathcal{P}$. With the above two statements hold,
it is easy to see that $\mathcal{P}\setminus \mathcal{P}_{non}
\subseteq \mathcal{U}(\mathcal{S}(\mathcal{I}))$, i.e., the points
obtained by WSISMin suffice to construct $\mathcal{O}$.

\begin{proposition}\label{pp: opt points dual}
The Pareto boundary of the inverse-SNR region can be mapped to the
Pareto boundary of the rate region $\mathcal{R}$ by mapping
$\mathcal{U}$ as given in Definition \ref{def: mapping},  and vice
versa, i.e.,
\begin{equation}\label{eq: P mapping B}
    \mathcal{P}=\mathcal{U}(\mathcal{B}).
\end{equation}
\end{proposition}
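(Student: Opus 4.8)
The plan is to exploit that $\mathcal{U}$ acts coordinatewise through the scalar map $g(x)=\frac{1}{2}\log_2(1+1/x)$, which is continuous and \emph{strictly decreasing} on $(0,\infty)$ with inverse $g^{-1}(r)=1/(2^{2r}-1)$. Hence $\mathcal{U}$ is a bijection from $R_{++}^2$ onto the open positive quadrant of rate pairs, and it \emph{reverses} the component-wise partial order: $(t_1,t_2)\preceq(t_1',t_2')$ if and only if $\mathcal{U}(t_1',t_2')\preceq\mathcal{U}(t_1,t_2)$, with the same pattern of equalities and strict inequalities preserved (since $g$ is strictly monotone in each coordinate separately). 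Moreover, by Definition \ref{def: mapping} we have $\mathcal{I}=\mathcal{U}^{-1}(\mathcal{R})$, so $\mathcal{U}$ restricts to a bijection between $\mathcal{I}$ and $\mathcal{R}$ on strictly positive rates.

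First I would translate the two Pareto definitions into dominance statements. A pair $(r_1,r_2)\in\mathcal{P}$ means no point of $\mathcal{R}$ lies in $(r_1,r_2)+\mathbf{K}$ other than itself, i.e. no achievable rate pair dominates it from above; a pair $(t_1,t_2)\in\mathcal{B}$ means no point of $\mathcal{I}$ lies in $(t_1,t_2)-\mathbf{K}$ other than itself, i.e. no achievable inverse-SNR pair dominates it from below. Because $\mathbf{K}=\mathbf{R}_+^2$ and $\mathcal{U}$ reverses the order, ``being dominated from above in $\mathcal{R}$'' corresponds exactly to ``being dominated from below in $\mathcal{I}$'' under $\mathcal{U}^{-1}$.

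Next I would prove the two inclusions by contraposition. For $\mathcal{U}(\mathcal{B})\subseteq\mathcal{P}$: take $(t_1,t_2)\in\mathcal{B}\subseteq\mathcal{I}$, so $\mathcal{U}(t_1,t_2)\in\mathcal{R}$; if some $(r_1',r_2')\in\mathcal{R}$ dominated $\mathcal{U}(t_1,t_2)$ from above, then $\mathcal{U}^{-1}(r_1',r_2')\in\mathcal{I}$ would dominate $(t_1,t_2)$ from below, contradicting $(t_1,t_2)\in\mathcal{B}$; hence $\mathcal{U}(t_1,t_2)\in\mathcal{P}$. For $\mathcal{P}\subseteq\mathcal{U}(\mathcal{B})$: take $(r_1,r_2)\in\mathcal{P}$, set $(t_1,t_2)=\mathcal{U}^{-1}(r_1,r_2)\in\mathcal{I}$; if some $(t_1',t_2')\in\mathcal{I}$ dominated it from below, then $\mathcal{U}(t_1',t_2')\in\mathcal{R}$ would dominate $(r_1,r_2)$ from above, a contradiction; hence $(t_1,t_2)\in\mathcal{B}$ and $(r_1,r_2)\in\mathcal{U}(\mathcal{B})$. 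Combining the two inclusions yields $\mathcal{P}=\mathcal{U}(\mathcal{B})$.

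The main obstacle I anticipate is the careful bookkeeping of strictness in the cone condition: since the dominance cone $\mathbf{K}=\mathbf{R}_+^2$ is closed, a ``dominating'' competitor may agree in one coordinate and strictly improve only in the other, so I must verify that $\mathcal{U}$ transports precisely this mixed pattern of equalities and strict inequalities, which is exactly what coordinatewise strict monotonicity of $g$ guarantees. A secondary point is the domain restriction to $R_{++}^2$: I should confirm that the Pareto-optimal rate pairs of interest have strictly positive components (equivalently, finite positive inverse-SNRs), so that they genuinely lie in the image of $\mathcal{U}$; the degenerate endpoints where one rate vanishes correspond to $t_i\to\infty$ and are handled as limits in the closure, without affecting the boundary correspondence.
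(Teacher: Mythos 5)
Your proof is correct and follows essentially the same route as the paper: both arguments use the fact that $\mathcal{U}$ is an order-reversing bijection between $\mathcal{I}$ and $\mathcal{R}$ and then derive each inclusion by contradiction/contraposition. If anything, your bookkeeping is slightly more careful than the paper's, which negates Pareto optimality using strict inequalities in \emph{both} coordinates even though the closed cone $\mathbf{K}=\mathbf{R}_+^2$ in the definition also forbids dominators that tie in one coordinate and strictly improve in the other --- the mixed case you explicitly handle via coordinatewise strict monotonicity of $g$.
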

\begin{proof}
See Appendix in \ref{proof: pp opt points dual}.
\end{proof}

\begin{proposition}\label{pp: unattainable points unnecessary}
The image of $\mathcal{B}\setminus \mathcal{S}(\mathcal{I})$ is not
necessary for constructing the achievable rate region $\mathcal{O}$.
\end{proposition}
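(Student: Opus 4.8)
The plan is to establish the inclusion $\mathcal{U}(\mathcal{B}\setminus \mathcal{S}(\mathcal{I}))\subseteq \mathcal{P}_{non}$ that the preceding discussion already identifies as sufficient. I would fix an arbitrary point $p=(t_1,t_2)\in \mathcal{B}\setminus \mathcal{S}(\mathcal{I})$. By Proposition~\ref{pp: opt points dual} its image $\mathcal{U}(p)$ lies on the rate Pareto boundary $\mathcal{P}$, so the task reduces to showing that $\mathcal{U}(p)$ is strictly dominated by some rate pair already in $\mathcal{O}$, and is therefore removable from $\mathcal{P}$ without altering the convex hull $H_\text{cvx}(\mathcal{P})=\mathcal{O}$.

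The first step is a convex-analysis argument producing a dominating chord in the inverse-SNR plane. Since $p$ minimizes no weighted sum $\mu t_1+\bar\mu t_2$ over $\mathcal{I}$, it cannot lie on the lower-left boundary of $\text{conv}(\mathcal{I})$: the supporting-hyperplane characterization of that boundary fails at $p$ for every non-negative normal $(\mu,\bar\mu)$. On the other hand, $p\in\mathcal{B}$ is Pareto optimal in $\mathcal{I}$, so no single point of $\mathcal{I}$ sits weakly below-left of $p$. Combining these facts, I would exhibit two distinct points $q_1,q_2\in\mathcal{I}$ and a weight $\theta\in(0,1)$ such that the chord point $\theta q_1+(1-\theta)q_2$ lies componentwise below $p$, with strict inequality in at least one coordinate; geometrically $p$ sits in a non-convex indentation of $\mathcal{B}$, so a chord of $\text{conv}(\mathcal{I})$ passes strictly beneath it (in two dimensions a boundary point of a convex set is a convex combination of at most two extreme points).

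The second step transports this domination through $\mathcal{U}$. Writing $g(t)=\tfrac{1}{2}\log_2(1+1/t)$ so that $\mathcal{U}(x,y)=(g(x),g(y))$, I would verify by differentiation that $g$ is strictly decreasing and strictly convex on $(0,\infty)$; these are precisely the two properties that make the argument go through. For each coordinate $i$, monotonicity turns $t_i\ge \theta q_{1,i}+(1-\theta)q_{2,i}$ into $g(t_i)\le g(\theta q_{1,i}+(1-\theta)q_{2,i})$, and convexity then gives $g(\theta q_{1,i}+(1-\theta)q_{2,i})\le \theta g(q_{1,i})+(1-\theta)g(q_{2,i})$. Chaining the two yields $g(t_i)\le \theta g(q_{1,i})+(1-\theta)g(q_{2,i})$ in both coordinates, strictly in the coordinate where the chord was strictly below $p$. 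Hence the time-sharing rate pair $\theta\,\mathcal{U}(q_1)+(1-\theta)\,\mathcal{U}(q_2)$, which lies in $\mathcal{O}$ because $\mathcal{U}(q_1),\mathcal{U}(q_2)\in\mathcal{R}$, strictly dominates $\mathcal{U}(p)$. Thus $\mathcal{U}(p)$ lies strictly below the Pareto boundary of $\mathcal{O}$ and is unnecessary for the convex hull, i.e. $\mathcal{U}(p)\in\mathcal{P}_{non}$, which is the desired inclusion.

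The core computation with $g$ is short and clean; the main obstacle is the first step, namely rigorously guaranteeing the existence of the strictly dominating chord. This requires care about whether $\mathcal{I}$ is closed and whether the relevant infima are attained, so that the boundary of $\text{conv}(\mathcal{I})$ and its representation through chords of $\mathcal{I}$ are well defined. I would handle this either by invoking closedness and boundedness-below of $\mathcal{I}$ inherited from the region's definition, or, if the extremal points are only approached, by passing to a limiting sequence of chords and using continuity of $g$ to preserve the domination in the limit.
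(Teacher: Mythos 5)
Your proposal is correct in outline and follows the same two-step strategy as the paper --- (i) place a chord of the inverse-SNR region beneath the unattainable Pareto point, (ii) push the resulting domination through $\mathcal{U}$ using monotonicity and convexity --- but you implement both steps differently. For step (ii) you apply the scalar monotonicity and strict convexity of $g(t)=\tfrac{1}{2}\log_2(1+1/t)$ coordinatewise; this is cleaner and slightly more general than the paper's Lemma \ref{lma: concave mapping}, which reaches the same conclusion by differentiating the implicit function $p(y)$ twice along the image of a line segment and needs $q$ to be linear. For step (i), the paper's Lemma \ref{lma: above line} is the concrete realization of your ``dominating chord'': a short contradiction with the minimality of the weighted sum $m$ shows that a boundary point outside $\mathcal{S}(\mu,\mathcal{I})$ whose first coordinate is sandwiched between those of two minimizers for the same $\mu$ lies strictly above the segment joining them. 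The attainment/closedness issue you flag is genuine but is equally glossed over by the paper (Lemma \ref{lma: above line} is conditional on the sandwiching hypothesis, which is never verified), so it does not put you behind. The one substantive difference worth fixing: your chord endpoints are arbitrary points $q_1,q_2\in\mathcal{I}$, whereas the paper's lie in $\mathcal{S}(\mu,\mathcal{I})\subseteq\mathcal{S}(\mathcal{I})$. For the literal claim ``not necessary'' your weaker version suffices, but for Theorem \ref{thm: sufficience to construct O}, which this proposition exists to serve, the dominating points must survive the simultaneous deletion of all of $\mathcal{U}(\mathcal{B}\setminus\mathcal{S}(\mathcal{I}))$, i.e., they should come from $\mathcal{U}(\mathcal{S}(\mathcal{I}))$ or its convex hull. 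You can get this for free by taking your chord on the lower-left boundary of the convex hull of $\mathcal{I}$, whose endpoints are weighted-sum minimizers and hence belong to $\mathcal{S}(\mathcal{I})$; make that choice explicit.
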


\begin{proof}
See Appendix in \ref{proof: pp unattainable points unnecessary}.
\end{proof}

\begin{thm}\label{thm: sufficience to construct O}
The points in $\mathcal{S}(\mathcal{I})$ are sufficient to construct
the achievable rate region $\mathcal{O}$.
\end{thm}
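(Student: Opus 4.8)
The plan is to derive the theorem as a direct consequence of Propositions~\ref{pp: opt points dual} and~\ref{pp: unattainable points unnecessary}, glued together by the bijectivity of the map $\mathcal{U}$. The first thing I would pin down is what ``sufficient to construct $\mathcal{O}$'' should mean operationally. Since each feasible $\mathbf{w}$ contributes the entire box $\{(r_1,r_2):r_1\le R_1,\,r_2\le R_2\}$, the region $\mathcal{R}$ is downward-closed, hence so is $\mathcal{O}=H_{\text{cvx}}(\mathcal{R})$, and its upper-right boundary is exactly the concave (upper) envelope of the Pareto boundary $\mathcal{P}$. Consequently $\mathcal{O}=H_{\text{cvx}}(\mathcal{P})$, and the points on $\mathcal{P}$ genuinely needed to take this convex hull are precisely $\mathcal{P}\setminus\mathcal{P}_{non}$, while the points of $\mathcal{P}_{non}$ lie below the envelope and add nothing. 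So the target reduces to showing $\mathcal{P}\setminus\mathcal{P}_{non}\subseteq\mathcal{U}(\mathcal{S}(\mathcal{I}))$.

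For the set-theoretic core, I would split the Pareto boundary of the inverse-SNR region as the disjoint union $\mathcal{B}=\mathcal{S}(\mathcal{I})\cup(\mathcal{B}\setminus\mathcal{S}(\mathcal{I}))$, where I first note $\mathcal{S}(\mathcal{I})\subseteq\mathcal{B}$ because minimizing a weighted sum with non-negative weights over $\mathcal{I}$ always returns a Pareto-optimal inverse-SNR pair. Applying $\mathcal{U}$ and invoking Proposition~\ref{pp: opt points dual} gives $\mathcal{P}=\mathcal{U}(\mathcal{B})=\mathcal{U}(\mathcal{S}(\mathcal{I}))\cup\mathcal{U}(\mathcal{B}\setminus\mathcal{S}(\mathcal{I}))$, and since $\mathcal{U}$ is a bijection these two images are disjoint. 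Now take any $p\in\mathcal{P}\setminus\mathcal{P}_{non}$: it lies in one of the two images, but the second image is contained in $\mathcal{P}_{non}$ by Proposition~\ref{pp: unattainable points unnecessary}, so $p\in\mathcal{U}(\mathcal{S}(\mathcal{I}))$. This yields $\mathcal{P}\setminus\mathcal{P}_{non}\subseteq\mathcal{U}(\mathcal{S}(\mathcal{I}))$, exactly the inclusion flagged as ``easy to see'' in the preceding discussion.

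Combining the two steps finishes the argument: every point of $\mathcal{P}$ needed to reconstruct $\mathcal{O}$ is the $\mathcal{U}$-image of some point produced by a WSISMin problem, so solving the sequence of WSISMin problems over $0\le\mu\le1$ and mapping the solutions through $\mathcal{U}$ recovers enough of $\mathcal{P}$ to take the convex hull and obtain $\mathcal{O}$.

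I expect the main obstacle to be the first, quasi-definitional step rather than the set algebra: one has to justify rigorously that $\mathcal{P}_{non}$ is truly dispensable, i.e., that $H_{\text{cvx}}(\mathcal{P})=H_{\text{cvx}}(\mathcal{P}\setminus\mathcal{P}_{non})$. This requires showing that every point of $\mathcal{P}_{non}$ is a convex combination of points in $\mathcal{P}\setminus\mathcal{P}_{non}$ (it lies on a chord of the concave envelope), so that discarding it leaves the hull unchanged; care is also needed to confirm $\mathcal{S}(\mathcal{I})\subseteq\mathcal{B}$ even at the boundary weights $\mu\in\{0,1\}$ and when multiple minimizers occur. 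The remaining manipulations are routine once the bijectivity and downward-closedness are in hand.
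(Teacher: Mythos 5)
Your proof is correct and follows essentially the same route as the paper's: it combines Proposition~\ref{pp: opt points dual} and Proposition~\ref{pp: unattainable points unnecessary} via the decomposition $\mathcal{P}=\mathcal{U}(\mathcal{S}(\mathcal{I}))\cup\mathcal{U}(\mathcal{B}\setminus\mathcal{S}(\mathcal{I}))$ to conclude that $\mathcal{U}(\mathcal{S}(\mathcal{I}))$ suffices for the convex hull. You merely make explicit two points the paper leaves implicit ($\mathcal{S}(\mathcal{I})\subseteq\mathcal{B}$ and the dispensability of $\mathcal{P}_{non}$ under convex hulling), which is a reasonable tightening but not a different argument.
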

\begin{proof}
Since $\mathcal{P}=\mathcal{U}(\mathcal{B})$ and
$\mathcal{U}(\mathcal{B}\setminus \mathcal{S}(\mathcal{I}))$ in
$\mathcal{P}$ is not necessary for constructing the achievable rate
region $\mathcal{O}$, it is easy to see that
$\mathcal{U}(\mathcal{S}(\mathcal{I}))$ suffices to construct the
achievable rate region $\mathcal{O}$ given that $\mathcal{O}$ is
obtained by convex hulling over $\mathcal{P}$.
\end{proof}
Since we have shown that $\mathcal{S}(\mathcal{I})$ suffices to
construct the achievable rate region $\mathcal{O}$, instead of studying
the problem in (\ref{eq: WSRmax}) and (\ref{eq: sum-power WSR}), we now
study the solutions of the WSISMin problems in the following.
\subsection{Distributed Beamforming under Sum-power Constraint}\label{subsec:sumpower}
In this subsection, we consider the case where the relay cluster has a
sum-power constraint. The total transmit power of the relay cluster is
\begin{eqnarray}\label{eq:sumpower}
    p_{R} &=&\sum_{i=1}^{K}\left(|x_{i}h_{1,i}|^2 P_{S1}+|x_{i}h_{2,i}|^2 P_{S2}+|x_i|^2\sigma_i^2\right)\\
    &=&\mathbf{x}^{H}\mathbf{D}\mathbf{x},
\end{eqnarray}
where we have
$\mathbf{D}=\text{diag}[|h_{1,1}|^2P_{S1}+|h_{2,1}|^2P_{S2}+\sigma_1^2,
\cdots,|h_{1,K}|^2P_{S1}+|h_{2,K}|^2P_{S2}+\sigma_K^2]$. According to
the discussion in the last subsection, to quantify the rate region is
equivalent to seeking the optimal solutions for the WSISMin problems
given the sum-power constraint as follows:
\begin{eqnarray}\label{eq: inverse-SNR OPT}
\mathop {\min }\limits_{\mathbf{x}} & & \mu/\text{SNR}_{1}+\bar{\mu}/\text{SNR}_{2}\\
\text{s.t.} & & \mathbf{x}^{T}\mathbf{D}\mathbf{x}\leq P_R,
\end{eqnarray}
where $0<\mu<1$ is the weight. First of all, the optimal $\mathbf{x}^*$
must satisfy $\mathbf{x}^{*T}\mathbf{D}\mathbf{x}^*= P_R$; otherwise,
we can always scale up $\mathbf{x}$ such that the objective function is
decreased. When $\mu=0~ \text{or}~1$, this problem degrades to find the
optimal beamforming vector for distributed \emph{one-way} relay, which
has been extensively studied
\cite{Khoshnevis08}\cite{Nassab08}\cite{Tang07}.

Given the SNRs from (\ref{eq: snr1}) and (\ref{eq: snr2}) and the fact
of $\mathbf{x}^{*T}\mathbf{D}\mathbf{x}^*= P_R$, the optimal
$\mathbf{x}$ for (\ref{eq: inverse-SNR OPT}) should be the solution of
the following problem:
\begin{equation}
\mathop {\min }\limits_{\mathbf{x}} \frac{\mathbf{x}^{T}[\nu
\mathbf{D}/P_R+\mu/P_{S1} \mathbf{A}_1 +\bar{\mu}/P_{S2} \mathbf{A}_2
]\mathbf{x}}{\mathbf{x}^{T}\mathbf{ \hat{f} } \mathbf{ \hat{f}
}^{T}\mathbf{x}},
\end{equation}
where $\nu=\mu\sigma_{S1}^2 /P_{S2}+\bar{\mu}\sigma_{S2}^2/P_{S1}$. The
above problem is equivalent to
\begin{equation}\label{eq: max sump}
\mathop {\max }\limits_{\mathbf{x}} \frac{\mathbf{x}^{T}\mathbf{
\hat{f} } \mathbf{ \hat{f} }^{T}\mathbf{x}}{\mathbf{x}^{T}[\nu
\mathbf{D}/P_R+\mu/P_{S1} \mathbf{A}_1 +\bar{\mu}/P_{S2} \mathbf{A}_2
]\mathbf{x}},
\end{equation}
where the optimal solution is given as
\begin{equation}\label{eq:opt x}
    \mathbf{x}^*=\xi\mathbf{\Gamma}^{-1}\mathbf{\hat{f}}/\|\mathbf{\Gamma}^{-1}\mathbf{\hat{f}}\|,
\end{equation}
where
\begin{eqnarray}
  \mathbf{\Gamma} &=& \text{diag}\left[\nu \frac{\beta_1}{P_R} +
\eta_1 ,\cdots, \nu \frac{\beta_K}{P_R} +  \eta_K \right], \\
  \beta_i &=& {\sigma_{i}^2 + P_{S1} \left| {h_{1,i} } \right|^2  + P_{S2}
\left| {h_{2,i} } \right|^2 }, \\
  \eta_i &=& \sigma_{i}^2\left(\left|
{h_{1,i} } \right|^2 \mu /P_{S1}  + \left| {h_{2,i} } \right|^2 \bar
\mu /P_{S2}\right),
\end{eqnarray}
and $\xi$ is a scalar such that
$\mathbf{x}^{*T}\mathbf{D}\mathbf{x}^*=P_R$. By searching over all
$\mu$'s, we derived a set of $\mathbf{x}^*$'s and hence we could
compute a set of rate pairs by injecting (\ref{eq:opt x}) into
(\ref{eq: rate1}) and (\ref{eq: rate2}). The achievable rate region
$\mathcal{O}$ is then obtained by convex-hulling over such a set of
rate pairs.

\emph{Partially distributed implementation}: The control center first
decides the appropriate $\mu$ such that S1 and S2 achieve a desirable
rate pair; and it broadcasts $\mu$ and the global constant
$\xi/\|\mathbf{\Gamma}^{-1}\hat{\mathbf{f}}\|$, while $P_{S1}$,
$P_{S2}$, $\sigma_{S1}$, $\sigma_{S2}$ are constant and assumed to be
known at all the relays. Upon receiving the broadcast message from the
control center, each relay node determines the optimal $w_i$ from its
local information $h_{1,i}$ and $h_{2,i}$, which is given as
\begin{equation}
    w_i=\frac{\xi}{\|\mathbf{\Gamma}^{-1}\hat{\mathbf{f}}\|}\frac{|h_{1,i}||h_{2,i}|}{\nu\beta_i/P_R+\eta_i}e^{-j(\angle h_{1,i}+\angle
    h_{2,i})}.
\end{equation}

\subsection{Distributed Beamforming under Individual-Power
Constraints}\label{subsec: indvpower} In the previous subsection, we
assume that the relay cluster has a sum-power constraint. In practice,
each relay may have its own power constraint due to the individual
power supplies. The transmit power at relay $i$ is given as
\begin{equation}\label{eq:sumpower}
    p_{R,i}=|x_i|^2(|h_{1,i}|^2 P_{S1}+|h_{2,i}|^2
    P_{S2}+\sigma_{i}^{2}),
\end{equation}
where $p_{R,i}\leq p_{i}$, with $p_i$ is the maximum allowable power
for relay node $i$. Equivalently, we could set
$p_{R,i}=\alpha_i^2p_{i}$ with $0\leq \alpha_i\leq 1$ as a new design
variable. Correspondingly, the received SNRs can be rewritten as
(\ref{eq:snr1_indv}) and (\ref{eq:snr2_indv})

\begin{figure*}
\begin{eqnarray}
  \text{SNR}_{1} &=& \frac{P_{S2}\left(\sum_{i=1}^{K}|h_{1,i}| |h_{2,i}|\sqrt{\frac{p_{ i}}{\sigma_i^2+P_{S1}|h_{1,i}|^2+P_{S2}|h_{2,i}|^2}}\alpha_i  \right)^2}{\sigma_{S1}^2+\sum_{i=1}^{K}\frac{\sigma_i^2|h_{1,i}|^2 \alpha_{i}^2p_{ i}}{\sigma_i^2+P_{S1}|h_{1,i}|^2+P_{S2}|h_{2,i}|^2}},\label{eq:snr1_indv} \\
  \text{SNR}_{2} &=& \frac{P_{S1}\left(\sum_{i=1}^{K}|h_{2,i}| |h_{1,i}|\sqrt{\frac{p_{ i}}{\sigma_i^2+P_{S1}|h_{1,i}|^2+P_{S2}|h_{2,i}|^2}}\alpha_i  \right)^2}{\sigma_{S2}^2+\sum_{i=1}^{K}\frac{\sigma_i^2|h_{2,i}|^2 \alpha_{i}^2p_{
  i}}{\sigma_i^2+P_{S1}|h_{1,i}|^2+P_{S2}|h_{2,i}|^2}}.\label{eq:snr2_indv}
\end{eqnarray}
\end{figure*}
Let\begin{eqnarray}
     \mathbf{H}_1 &=& \text{diag}[\sigma_1^2p_{1}|h_{1,1}|^2,\cdots,\sigma_K^2p_{K}|h_{1,K}|^2], \\
     \mathbf{H}_2 &=& \text{diag}[\sigma_1^2p_{1}|h_{2,1}|^2,\cdots,\sigma_K^2p_{K}|h_{2,K}|^2], \\
     g_i &=& \sqrt{p_i}|h_{1,i}||h_{2,i}|/\sqrt{\mathbf{D}_{i,i}}.
   \end{eqnarray}
We can recast (\ref{eq:snr1_indv}) and (\ref{eq:snr2_indv}) as
\begin{eqnarray}
  \text{SNR}_{1} &=& \frac{P_{S2}\mathbf{\alpha}^T\mathbf{g}\mathbf{g}^T\mathbf{\alpha}}{\sigma_{S1}^2+\alpha^T\mathbf{H}_1\mathbf{D}^{-1}\mathbf{\alpha}}, \\
  \text{SNR}_{2} &=&  \frac{P_{S1}\mathbf{\alpha}^T\mathbf{g}\mathbf{g}^T\mathbf{\alpha}}{\sigma_{S2}^2+\alpha^T\mathbf{H}_2\mathbf{D}^{-1}\mathbf{\alpha}},
\end{eqnarray}
respectively, where $\mathbf{0}\preceq\mathbf{\alpha}\preceq
\mathbf{1}$. The WSISMin problem for the individual-power constraint
case is now given as:
\begin{equation}\label{eq:  min inversed SNR}
\mathop{\min}\limits_{\mathbf{0}\preceq\mathbf{\alpha}\preceq\mathbf{1}}
\frac{\nu+\mathbf{\alpha}^T(\mathbf{H}_1\mathbf{D}^{-1}\mu/P_{S2}+\mathbf{H}_2\mathbf{D}^{-1}\bar{\mu}/P_{S1})\mathbf{\alpha}}{\mathbf{\alpha}^T\mathbf{g}\mathbf{g}^T\mathbf{\alpha}},
\end{equation}
which is equivalent to solve
\begin{equation}\label{eq:  max harmonic SNR}
\mathop{\max}\limits_{\mathbf{0}\preceq\mathbf{\alpha}\preceq\mathbf{1}}
\frac{\mathbf{\alpha}^T\mathbf{g}\mathbf{g}^T\mathbf{\alpha}}{\nu+\mathbf{\alpha}^T(\mathbf{H}_1\mathbf{D}^{-1}\mu/P_{S2}+\mathbf{H}_2\mathbf{D}^{-1}\bar{\mu}/P_{S1})\mathbf{\alpha}}.
\end{equation}
For notation simplicity, let
\begin{eqnarray}
    \Psi&=&\left[(\mathbf{H}_1\mu/P_{S2}+\mathbf{H}_2\bar{\mu}/P_{S1})\mathbf{D}^{-1}/\nu\right]^{1/2},\\
     \mathbf{\tilde{g}}&=&\mathbf{g}/\sqrt{\nu},
\end{eqnarray}
where $\Psi$ is diagonal with its diagonal elements denoted as
$\psi_i$, $i=1,\cdots,K$. Then the above problem becomes
\begin{equation}\label{eq: max harmonic SNR}
\mathop{\max}\limits_{\mathbf{0}\preceq\mathbf{\alpha}\preceq\mathbf{1}}
\frac{\langle \mathbf{\tilde{g}},
\mathbf{\alpha}\rangle^2}{1+\|\mathbf{\Psi} \mathbf{\alpha}\|^2}.
\end{equation}
For each given $\mu$, (\ref{eq: max harmonic SNR}) can be solved
analytically by following the results in \cite{NT_BF}. Before we
present the solution, we first define $\phi_i  =  \tilde{g}_i/\psi_i^2$
for $i=1,\cdots, K$ and $\phi_{K+1}=0$. Then we sort $\phi_i$ as
$\phi_{\tau_1}\geq\phi_{\tau_2}\geq\cdots\geq\phi_{\tau_K}\geq
\phi_{\tau_{K+1}}$. Moreover, let $\lambda_k
=\frac{1+\sum_{m=1}^{k}\psi_{\tau_{m}}^2}{\sum_{m=1}^{k}\tilde{g}_{\tau_m}}$
and define the $j$-th element of the vector $\mathbf{\alpha}^{(k)}$ as
\begin{equation}\label{eq: opt alpha}
    \alpha_j^{(k)}  = \left\{ {\begin{array}{*{20}c}
   {1,} & {j = \tau _1 , \cdots \tau _k }  \\
   {\lambda _k \phi _j ,} & {j = \tau _{k + 1} , \cdots \tau _K }  \\
\end{array}} \right..
\end{equation}
Then the solution for (\ref{eq: max harmonic SNR})  is given by
following theorem.
\begin{thm}
The solution of (\ref{eq: max harmonic SNR}) is
$\mathbf{\alpha}^{(k^*)}$ given by (\ref{eq: opt alpha}), where $k^*$
is the smallest $k$ such that $\lambda_k<\phi_{\tau_{k+1}}^{-1}$.
\end{thm}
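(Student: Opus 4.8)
The plan is to characterize the maximizer of (\ref{eq: max harmonic SNR}) through its Karush--Kuhn--Tucker (KKT) conditions, after first passing to the monotone-equivalent objective $F(\mathbf{\alpha})=\langle \tilde{\mathbf{g}},\mathbf{\alpha}\rangle/\sqrt{1+\|\mathbf{\Psi}\mathbf{\alpha}\|^2}$, whose maximization over the box $\mathbf{0}\preceq\mathbf{\alpha}\preceq\mathbf{1}$ is equivalent since all $\tilde g_i>0$ and the feasible set lies in the nonnegative orthant. A first observation I would record is that no lower bound is active at an optimum: if $\alpha_i=0$ for some $i$, then $\partial F/\partial\alpha_i>0$ there, so $F$ strictly increases as $\alpha_i$ leaves $0$; hence each optimal $\alpha_i$ lies in $(0,1]$.

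Next I would write the stationarity conditions componentwise. A \emph{fractional} index $j$ with $0<\alpha_j<1$ satisfies $\tilde g_j(1+\|\mathbf{\Psi}\mathbf{\alpha}\|^2)=\langle\tilde{\mathbf{g}},\mathbf{\alpha}\rangle\,\psi_j^2\alpha_j$, which rearranges to $\alpha_j=\lambda\phi_j$ with $\lambda:=(1+\|\mathbf{\Psi}\mathbf{\alpha}\|^2)/\langle\tilde{\mathbf{g}},\mathbf{\alpha}\rangle$ and $\phi_j=\tilde g_j/\psi_j^2$; a \emph{clamped} index $a$ with $\alpha_a=1$ instead satisfies $\partial F/\partial\alpha_a\ge 0$, i.e. $\lambda\phi_a\ge 1$. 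Comparing the two cases shows every clamped index has $\phi_a\ge 1/\lambda$ while every fractional index has $\phi_b<1/\lambda$, so the clamped set must consist exactly of the indices with the largest $\phi$-values, namely $\tau_1,\dots,\tau_k$ for some $k$. This reproduces precisely the two-piece form (\ref{eq: opt alpha}).

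The value of $\lambda$ is then forced by self-consistency. Substituting the two-piece form into $\lambda\langle\tilde{\mathbf{g}},\mathbf{\alpha}\rangle=1+\|\mathbf{\Psi}\mathbf{\alpha}\|^2$, the contributions of the fractional block, both equal to $\lambda^2\sum_{m>k}\tilde g_{\tau_m}\phi_{\tau_m}$ on the two sides after using $\psi_{\tau_m}^2\phi_{\tau_m}^2=\tilde g_{\tau_m}\phi_{\tau_m}$, cancel identically, collapsing the relation to $\lambda\sum_{m\le k}\tilde g_{\tau_m}=1+\sum_{m\le k}\psi_{\tau_m}^2$, i.e. $\lambda=\lambda_k$. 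Thus the only remaining unknown is the integer $k$, and feasibility/optimality of the candidate requires the two threshold inequalities $\lambda_k\phi_{\tau_k}\ge 1$ (so the clamped block genuinely wants to exceed $1$) and $\lambda_k\phi_{\tau_{k+1}}<1$ (so the fractional block stays below $1$).

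The last inequality is exactly the defining property of $k^*$. To secure the first one I would invoke the mediant inequality: writing $\lambda_k$ as the mediant of $\lambda_{k-1}$ and $1/\phi_{\tau_k}=\psi_{\tau_k}^2/\tilde g_{\tau_k}$ shows $\lambda_k$ lies between them, and minimality of $k^*$ gives $\lambda_{k^*-1}\ge 1/\phi_{\tau_{k^*}}$, whence $\lambda_{k^*}\ge 1/\phi_{\tau_{k^*}}$; the base case $k^*=1$ gives $\lambda_1\phi_{\tau_1}=1+1/\psi_{\tau_1}^2>1$ directly, and the appended $\phi_{\tau_{K+1}}=0$ guarantees that some $k^*\le K$ exists. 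Global optimality of the resulting KKT point then follows from pseudoconcavity of $F$, a ratio of a nonnegative concave (indeed linear) numerator over a positive convex denominator, for which stationarity over the convex box is sufficient for a global maximum. I expect the two genuinely delicate steps to be the exchange argument pinning the clamped set to the top-$k$ $\phi$-values and the mediant monotonicity that lets the single defining inequality simultaneously enforce the clamped-side bound; the cancellation giving $\lambda=\lambda_k$ and the pseudoconcavity justification are comparatively routine.
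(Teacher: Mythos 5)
Your argument is correct, but it is worth noting that the paper itself offers no proof at all for this theorem: the stated ``proof'' is a single sentence deferring entirely to the cited reference on one-way distributed beamforming. Your KKT derivation is therefore a genuinely self-contained alternative, and every step checks out: the passage to $F(\alpha)=\langle\tilde{\mathbf{g}},\alpha\rangle/\sqrt{1+\|\Psi\alpha\|^2}$ is legitimate since the numerator is nonnegative on the box; the sign of $\partial F/\partial\alpha_i$ at $\alpha_i=0$ rules out active lower bounds; the fractional/clamped dichotomy $\alpha_j=\lambda\phi_j$ versus $\lambda\phi_a\ge 1$ correctly forces the clamped set to be the top-$k$ indices in the $\phi$-ordering; the cancellation of the fractional block via $\psi_{\tau_m}^2\phi_{\tau_m}^2=\tilde g_{\tau_m}\phi_{\tau_m}$ does collapse the self-consistency relation to $\lambda=\lambda_k$; the mediant argument combined with minimality of $k^*$ delivers $\lambda_{k^*}\phi_{\tau_{k^*}}\ge 1$, with the $k^*=1$ base case and the sentinel $\phi_{\tau_{K+1}}=0$ handling the boundary cases; and pseudoconcavity of a nonnegative linear function over a positive convex one makes the KKT point globally optimal over the convex box. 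What your approach buys is precisely the structural insight the paper forgoes by citing out: it makes visible why the water-filling-type threshold $\lambda_{k^*}$ is the unique broadcast parameter needed for the partially distributed implementation. The only points a careful referee might ask you to add a sentence on are (i) the degenerate case of ties $\phi_{\tau_k}=\phi_{\tau_{k+1}}$, where the clamped set is determined only up to the tie but all resulting candidates coincide in objective value, and (ii) an explicit citation or one-line verification of the fact that pseudoconcavity makes stationarity sufficient; neither affects the validity of the argument.
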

\begin{proof}
This result directly follows the results in \cite{NT_BF}.
\end{proof}

\emph{Partially distributed implementation:} Besides the value of
$\mu$, the control center only needs to broadcast $\lambda_{k^*}$ at
each operation period. Each relay node then determines $\phi_i$ with
its local information. If $\phi_i^{-1}\leq \lambda_{k^*}$, the relay
node transmits at its maximum power. Otherwise, it transmits with power
$(\lambda_{k^*}\phi_i)^2 p_i$, i.e., the optimal
$w_i=\alpha_{i}^{(k^*)} \sqrt{p_i} e^{-j(\angle h_{1,i}+\angle
h_{2,i})}$, where $\alpha_{i}^{(k^*)}$ is given in (\ref{eq: opt
alpha}). From the solutions, we see that in general some relay nodes
may not transmit with maximum transmit power.

\section{Non-reciprocal Channel Case}\label{sec: nonrecip}
In the last section, we have discussed the case where the uplink and
downlink channels are reciprocal. In this section, we discuss the case
where the uplink and downlink channels are non-reciprocal, which may be
the result of deploying frequency-division-duplex (FDD) system.

Due to the lack of channel reciprocity, the approach taken in the last
section does not apply here. In order to characterize the boundary of
the region $\mathcal{R}$, as we discussed before a commonly used method
is to solve the following
\begin{eqnarray}\label{eq: non-rep-opt-formulation}
    \mathop {\max }\limits_\mathbf{w} {\rm  } & & \frac{\lambda}{2} \log_2\left(1+\frac{P_{S2} |\mathbf{ f
    }^{T}_{2}\mathbf{w}|^2}{\sigma_{S1}^2+\mathbf{w}^{H}\mathbf{A}_1\mathbf{w}}\right)\nonumber \\
    & &+\frac{ 1-\lambda }{2}\log_2\left(1+\frac{P_{S1} |\mathbf{ f }^{T}_{1}\mathbf{w}|^2}{\sigma_{S2}^2+\mathbf{w}^{H}\mathbf{A}_2\mathbf{w}}\right) \\
  \text{s.t.} & & \mathbf{w}\in \Omega_w,
\end{eqnarray}
for each given weight vector $(\lambda, 1-\lambda)$. However, the above
problem is non-convex since the objective function is not a concave
function. To efficiently quantify the rate region, here we resort to an
alternative method called the \emph{rate-profile} method \cite{Rui09},
formulated as
\begin{eqnarray}
    \mathop {\max }\limits_{\mathbf{w}, R_{sum} }{\rm  } & & R_{sum}\\
\text{s.t.} & & \frac{1}{2}\log_2\left(1+\frac{P_{S2}|\mathbf{f}^{T}_{2}\mathbf{w}|^2}{\sigma_{S1}^2+\mathbf{w}^{H}\mathbf{A}_1\mathbf{w}}\right)\geq \kappa R_{sum}, \\
   & &\frac{1}{2}\log_2\left(1+\frac{P_{S1} |\mathbf{f}^{T}_{1}\mathbf{w}|^2}{\sigma_{S2}^2+\mathbf{w}^{H}\mathbf{A}_2\mathbf{w}}\right) \geq \bar{\kappa} R_{sum}, \\
   & &\mathbf{w}\in \Omega_w,
\end{eqnarray}
where $R_{sum}$ is the sum rate given a rate profile vector
$[\kappa,\bar{\kappa}]$ with $0\leq \kappa \leq 1$ and
$\bar{\kappa}=1-\kappa$. Let
$\mathbf{F}_1=\mathbf{f}_1^*\mathbf{f}_1^T$,
$\mathbf{F}_2=\mathbf{f}_2^*\mathbf{f}_2^T$, and
$\mathbf{X}=\mathbf{w}\mathbf{w}^{H}$. The above problem is equivalent
to
\begin{eqnarray}
    \mathop {\max }\limits_{\mathbf{X}, R_{sum} } {\rm  } & & R_{sum} \label{eq: general SDP}\\
    \text{s.t.} & &\frac{1}{2}\log_2\left(1+\frac{P_{S2}\text{tr}(\mathbf{F}_2 \mathbf{X})}{\sigma_{S1}^2+\text{tr}(\mathbf{A_1} \mathbf{X})}\right)\geq \kappa R_{sum},  \\
   & &\frac{1}{2}\log_2\left(1+\frac{P_{S1} \text{tr}(\mathbf{F}_1 \mathbf{X})}{\sigma_{S2}^2+\text{tr}(\mathbf{A_2} \mathbf{X})}\right) \geq \bar{\kappa} R_{sum}, \\
   & &\mathbf{X}\in \Omega_X, \label{eq: power const X}\\
    & & \mathbf{X} \succeq 0,  \\
   & &\textrm{rank}(\mathbf{X})=1 ,
\end{eqnarray}
where the last constraint $\textrm{rank}(\mathbf{X})=1$ comes from the
fact $\mathbf{X}=\mathbf{w}\mathbf{w}^H$, and $\Omega_X=\{\mathbf{X}:
\mathbf{X}=\mathbf{w}\mathbf{w}^H, \mathbf{w}\in \Omega_w\}$ and
$\Omega_w$ is defined after (\ref{eq: achievable rate}). According to
different assumptions on the power constraint, the above problem can be
further converted into different semi-definite programming (SDP)
problems after semi-definite relaxation (SDR).
\subsection{Sum-power Constrained Case}\label{subsec:sumpower} In this
subsection, we assume that the relay cluster operates under a sum-power
constraint $P_{R}$. Given the sum-power constraint, the power
constraint in (\ref{eq: power const X}) can be replaced by
$\textrm{tr}(\mathbf{D} \mathbf{X})\leq P_R$, where
$\mathbf{D}=\text{diag}[|h_{1,1}|^2P_{S1}+|h_{2,1}|^2P_{S2}+\sigma_1^2,\cdots,|h_{1,K}|^2P_{S1}+|h_{2,K}|^2P_{S2}+\sigma_K^2]$.
Since the rank-one constraint is not convex, the problem is still not a
convex problem and hence may not be efficiently solvable. To address
this issue, let us first remove the rank-one constraint and consider
the following relay power minimization problem for given set of
$\kappa$ and $R_{sum}=r$:
\begin{eqnarray}
  \mathop {\min }\limits_{\mathbf{X}}  & &  \textrm{tr}(\mathbf{D} \mathbf{X}) \\
  \text{s.t.} & &\frac{1}{2}\log_2\left(1+\frac{P_{S2}\text{tr}(\mathbf{F}_2 \mathbf{X})}{\sigma_{S1}^2+\text{tr}(\mathbf{A_1} \mathbf{X})}\right)\geq \kappa r, \\
   & &\frac{1}{2}\log_2\left(1+\frac{P_{S1} \text{tr}(\mathbf{F}_1 \mathbf{X})}{\sigma_{S2}^2+\text{tr}(\mathbf{A_2} \mathbf{X})}\right) \geq \bar{\kappa} r, \\
  & & \mathbf{X} \succeq 0,
\end{eqnarray}
which is equivalent to
\begin{eqnarray}
  \mathop {\min }\limits_{\mathbf{X}}  & &  \textrm{tr}(\mathbf{D} \mathbf{X}) \label{eq: orig power min}\\
  \text{s.t.} & &\frac{P_{S2}\text{tr}(\mathbf{F}_2 \mathbf{X})}{\sigma_{S1}^2+\text{tr}(\mathbf{A_1} \mathbf{X})}\geq \gamma_1,  \\
   & &\frac{P_{S1} \text{tr}(\mathbf{F}_1 \mathbf{X})}{\sigma_{S2}^2+\text{tr}(\mathbf{A_2} \mathbf{X})} \geq \gamma_2,  \\
  & & \mathbf{X} \succeq 0,
\end{eqnarray}
where $\gamma_1=2^{2\kappa r}-1$, $\gamma_2=2^{2\bar{\kappa} r}-1$, and
they can be considered as the SNR constraints for S1 and S2,
respectively. Since
$\sigma_{S1}^2+\text{tr}(\mathbf{A}_1\mathbf{X})\geq 0$ and
$\sigma_{S2}^2+\text{tr}(\mathbf{A}_2\mathbf{X}) \geq 0$, we could
rewrite the above problem as following SDP problem:
  \begin{eqnarray}
    \mathop {\min }\limits_{\mathbf{X}}  & &  \textrm{tr}(\mathbf{D} \mathbf{X}) \label{eq: power min}\\
    \textrm{s.t.} & & \textrm{tr}[(P_{S2}\mathbf{F}_2-\gamma_1\mathbf{A}_1)\mathbf{X}] \geq \gamma_1 \sigma_1^2,\label{eq: quadratic con 1}\\
     & & \textrm{tr}[(P_{S1}\mathbf{F}_1-\gamma_2\mathbf{A}_2)\mathbf{X}] \geq \gamma_2 \sigma_2^2, \label{eq: quadratic con 2}\\
    & & \mathbf{X} \succeq 0 \label{eq: quadratic con 3}.
   \end{eqnarray}
Denote the optimal value of the above problem as $p_{R}^{*}$, which is
the minimum sum-power required by the relay cluster to support the
target SNRs $\gamma_1$ and $\gamma_2$ for S1 and S2, respectively. If
$p_{R}^{*} \leq P_{R}$, then $(\gamma_1, \gamma_2)$ must be an
achievable SNR pair. Otherwise, $\gamma_1$ and $\gamma_2$ are not
achievable. Based on this observation, we propose the following
bi-section algorithm such that the problem (\ref{eq: general SDP})
without rank-one constraint can be solved by solving a sequence of
convex power feasibility problems, with the assumption that
we know an upper bound for $R_{sum}$, denoted as $r_{max}$.\\
\underline{Algorithm 1:}
\begin{itemize}
  \item Initialize $r_{low}$ = 0, $r_{up}=r_{max}$.
  \item Repeat
  \begin{enumerate}
    \item Set $r \leftarrow \frac{1}{2}(r_{low}+r_{up})$.
    \item Solve problem (\ref{eq: power min})-(\ref{eq: quadratic con 3}) with the given $r$.
    \item Update $r$ with the bi-section method \cite{Boyd}: If
    $p_{R}^{*} \leq P_{R}$, set $r_{low}= r$; otherwise, $r_{up}= r$.
  \end{enumerate}
  \item Until $r_{up}-r_{low}<\epsilon$, where $\epsilon$ is a
  small positive accuracy parameter.
\end{itemize}

The rate upper bound $r_{max}$ can be derived as follows. We first
decouple the two-way relay channel into two one-way relay channels and
obtain a rate for each one-way relay channel. Denote the larger rate as
$\tilde{r}$. Then $r_{max}$ can be set as $2\tilde{r}$. The one-way
distributed relay beamforming with sum-power constraint is
well-studied, and the rate can be derived from the results in
\cite{Nassab08}.
\subsubsection{Rank-one solution}
The resulting optimal solution $\mathbf{X}_{opt}$ obtained from
Algorithm 1 may not be of rank-one due to the SDP relaxation, which
means that $\mathbf{X}_{opt}$ may not lead to an optimal beamforming
vector $\mathbf{w}$. However, since there are only two linear
constraints (\ref{eq: quadratic con 1}) and (\ref{eq: quadratic con
2}), it has been shown in \cite{Rui09} and \cite{rankone} that an exact
rank-one optimal solution can always be constructed from a non-rank-one
optimal solution. The transformation techniques developed in
\cite{Rui09} and \cite{rankone} can be used to obtain the rank-one
solution. Note that the beamforming solution for the non-reciprocal
channel case is fully centralized, which cannot be implemented in a
partially distributed fashion.

\subsection{Individual-Power Constrained Case}\label{subsec:indivpower} In
the previous subsection, we have discussed the sum-power constrained
case where the non-convex rate maximization problem is converted into a
sequence of convex sum-power minimization problems. In this subsection,
we put a stricter limitation on the relay power by assuming that each
node has its individual power constraint. In this case, following a
similar SDR technique to that in the previous subsection, the
optimization problem with individual power constraints can be cast as
\begin{eqnarray}
    \mathop {\max }\limits_{\mathbf{X}, R_{sum} } {\rm  } & & R_{sum} \label{eq: indiv SDP}\\
    \text{s.t.} & &\frac{P_{S2}\text{tr}(\mathbf{F}_2 \mathbf{X})}{\sigma_{S1}^2+\text{tr}(\mathbf{A_1} \mathbf{X})}\geq \gamma_1, \\
   & &  \frac{P_{S1} \text{tr}(\mathbf{F}_1 \mathbf{X})}{\sigma_{S2}^2+\text{tr}(\mathbf{A_2} \mathbf{X})} \geq \gamma_2, \\
   & & \mathbf{D}_{i,i} \mathbf{X}_{i,i} \leq  P_{R}^{i},~i=1, \cdots, K,  \\
    & & \mathbf{X} \succeq 0,
\end{eqnarray}
where $\mathbf{D}_{i,i}$ and $\mathbf{X}_{i,i}$ are the $i$-th diagonal
elements of $\mathbf{D}$ and $\mathbf{X}$, respectively. The
 transmit power at node $i$ amounts to $\mathbf{D}_{i,i} \mathbf{X}_{i,i}$
and the individual power limit at node $i$ is $P_{R}^{i}$. However, we
cannot translate the above problem into a sequence of power feasibility
problems as given in the last subsection, since we now have $K$
individual power constraints rather than a single sum-power constraint
for the whole relay cluster. Alternatively, we aim at solving a
sequence of the following problem via bi-section search over $r$.
\begin{eqnarray}\label{eq: max X r}
  \mathop {\max }\limits_{\mathbf{X},r}  & & r \\
  \textrm{s.t.} & & \textrm{tr}[(P_{S2}\mathbf{F}_2-\gamma_1\mathbf{A}_1)\mathbf{X}] \geq \gamma_1 \sigma_1^2,  \\
     & & \textrm{tr}[(P_{S1}\mathbf{F}_1-\gamma_2\mathbf{A}_2)\mathbf{X}] \geq \gamma_2 \sigma_2^2,  \\
    & & \mathbf{X}(i,i) \leq   P_{R}^{i}/\mathbf{D} (i,i) ,~i=1,\cdots, K,\\
    & & \mathbf{X} \succeq 0.
\end{eqnarray}
The above problem is convex over $\mathbf{X}$ at each given value of
$r$. Let $r^{*}$ be the maximum value obtained by solving (\ref{eq: max
X r}). For a given value of $r$, we solve the following feasibility
problem
\begin{eqnarray}
    \textrm{Find} & & \mathbf{X} \label{eq:feasibility}\\
       \textrm{s.t.} & & \textrm{tr}[(P_{S2}\mathbf{F}_2-\gamma_1\mathbf{A}_1)\mathbf{X}] \geq \gamma_1 \sigma_1^2, \label{eq: const 1}\\
     & & \textrm{tr}[(P_{S1}\mathbf{F}_1-\gamma_2\mathbf{A}_2)\mathbf{X}] \geq \gamma_2 \sigma_2^2, \label{eq: const 2} \\
     & & \mathbf{X}(i,i) \leq   P_{R}^{i}/\mathbf{D} (i,i) ,~i=1,\cdots, K, \label{eq: const 3}\\
     & & \mathbf{X} \succeq 0.
\end{eqnarray}
If it is feasible, we have $r \leq r^{*}$ and the corresponding rate is
achievable. Otherwise, we have $r> r^{*}$ and the corresponding rate is
not achievable. Based on this observation, we apply bi-section search
over $r$ to solve the problem in (\ref{eq: max X r}), where we solve a
convex feasibility problem of (\ref{eq:feasibility}) at each step. We
start with an interval $[0, r_{max}]$ that contains the optimal value
$r^{*}$ where $r_{max}$ can be obtained in a similar way as that for
the sum-power constrained case, and run the following algorithm.\\
\underline{Algorithm 2}
  \begin{itemize}
  \item Initialize $r_{low}$=0, $r_{up}=r_{max}$.
  \item Repeat
  \begin{enumerate}
    \item Set $r\leftarrow \frac{1}{2}(r_{low}+r_{up})$.
    \item Solve the feasibility problem given by (\ref{eq:feasibility})-(\ref{eq: const 3}) with given $r$.
    \item Update $r$: If the problem is feasible, set $r_{low}= r$; otherwise, $r_{up}= r$.
  \end{enumerate}
  \item Until $r_{up}-r_{low}<\epsilon$. Then $r^{*}=r_{low}$.
\end{itemize}

\subsubsection{Rank-one solution based on randomization}
Similar to the sum-power constrained case, the solution of $\mathbf{X}$
at the end of Algorithm 2, denoted as $\mathbf{X}_{opt}$, may not be
rank-one. However, since there are $K+2$ linear constraints here, we
cannot apply the rank-one decomposition technique in \cite{rankone},
which require the number of linear constraints to be less than or equal
to 3. Fortunately, various techniques have been developed
\cite{randomization} to generate good rank-one approximate solutions to
the original problem. One such efficient approach is based on
randomization \cite{randomization}: using $\mathbf{X}_{opt}$ to
randomly generate a set of candidate weight vectors,
$\{\mathbf{w}_{l}\}$, from which the ``best'' solution for the
beamforming vector $\mathbf{w}$ is selected. There are three ways of
generating $\{\mathbf{w}_{l}\}$ as presented in \cite{randomization}.
In order to satisfy the individual power constraint,  we adopt the
routine named \verb"randB" in \cite{randomization}. Specially, let
$\mathbf{e}_{l}$ be the vector whose elements are independent random
variables uniformly distributed on the unit circle in the complex
plane, i.e., its $i$-th element $[\mathbf{e}_{l}]_i=e^{j\theta_{l,i}}$,
where $\theta_{l,i}$'s are independent and uniformly distributed over
$[0, 2\pi)$. We choose $\mathbf{w}_{l}$ such that its $i$-th element
$[\mathbf{w}_{l}]_{i}=\sqrt{[\mathbf{X}_{opt}]_{ii}}[\mathbf{e}_{l}]_i$.
As we see, $|[\mathbf{w}_{l}]_i|^2=[\mathbf{X}_{opt}]_{ii}$; hence the
individual power constraint can be satisfied.

For each $\mathbf{X}^{(l)}$=$\mathbf{w}_l \mathbf{w}_l^{H}$, we
associate each $\mathbf{w}_l$ with a value $v(\mathbf{w}_l)$,
\begin{eqnarray}\label{eq: violation v}
    v(\mathbf{w}_l) &=&\max\bigg(1-\textrm{tr}[(\frac{P_{S1}\mathbf{F}_1}{\gamma_2\sigma^2_{S2}}-\frac{\mathbf{A}_2}{\sigma^2_{S2}})
    \mathbf{w}_{l}\mathbf{w}_{l}^{H}],\nonumber \\
 & &1-\textrm{tr}[(\frac{P_{S2}\mathbf{F}_2}{\gamma_1\sigma^2_{S1}}-\frac{\mathbf{A}_1}{\sigma^2_{S1}})\mathbf{w}_{l}\mathbf{w}_{l}^{H}]\bigg),
\end{eqnarray}
which reflects how much the constraints are violated. The ``best''
weight vector among the candidate vectors is the one that has the
minimum $v(\mathbf{w}_l)$, i.e.,
\begin{eqnarray}\label{eq: opt w}
    l^{*} &=& \mathop {\arg \min_l} v(\mathbf{w}_l),\\
    \mathbf{w}^*&=&\mathbf{w}_{l^*}.
\end{eqnarray}
\section{Sub-optimal Schemes}\label{sec: subopt}
In this section, we propose some suboptimal schemes with lower
complexity for implementation than the optimal ones established in the
previous sections.
\subsection{Reciprocal Channel Case}
In the reciprocal channel case, at first the transmit phases
$\theta_i$'s at the relays are matched to the channels as
$\theta_i=-(\angle h_{1,i}+\angle h_{2,i})$. Then with the sum-power
constraint, we propose the sub-optimal equal power beamforming where
each relay transmits with equal power. With the individual-power
constraints, we propose the max-power beamforming where each relay
transmits with its maximum power.
\begin{enumerate}
  \item \underline{Equal-power beamforming}:
  All the $K$ relay nodes transmit with the same power $P_R/K$; $\theta_i$'s and $x_i$'s for $i=1,\cdots,K$,
   are given as:
  \begin{eqnarray}
    \theta_i&=&-(\angle h_{1,i}+\angle h_{2,i}),\\
    x_i &=&\sqrt{\frac{P_{R}}{K(P_{S1} |h_{1,i}|^2+P_{S2}
    |h_{2,i}|^2+\sigma_i^2)}}. \label{eq: equal power}
  \end{eqnarray}
  \item \underline{Max-power beamforming}: Each relay transmits
  with its maximum allowable power $P_{R,i}$; $\theta_i$'s and $x_i$'s for $i=1,\cdots,K$, are given as:
  \begin{eqnarray}
  \theta_i&=&-(\angle h_{1,i}+\angle h_{2,i}),\\
    x_i&=&\sqrt{\frac{P_{R,i}}{P_{S1} |h_{1,i}|^2+P_{S2} |h_{2,i}|^2+\sigma_i^2}}.\label{eq: max power}
  \end{eqnarray}
\end{enumerate}
These sub-optimal schemes enjoy implementation simplicity since each
relay only requires the local channel information $h_{1,i}$ and
$h_{2,i}$ to decide the transmit phase and $x_i$.
\subsection{Non-reciprocal Channel Case}
For the non-reciprocal channel case, since the transmit phase cannot be
matched to the two-directional channels simultaneously, we propose a
sub-optimal scheme that greedily chooses the transmit phases.
Specifically, each relay chooses the transmit phase to be either
$\angle{h_{1,i}}+\angle{h_{2,i}^r}$ or
$\angle{h_{2,i}}+\angle{h_{1,i}^r}$, whichever maximizes its own
contribution to the overall SNRs at S1 or S2 without considering any
other relays' contributions, i.e., the transmit phase for each relay is
chosen by following criterion:
 \begin{equation}\label{eq: greedy BF angle}
    \theta_i^*= \arg \max \left(\frac{x_i^2P_{S2}|h_{2,i}h_{1,i}^r e^{j\theta_i}|^2}{\sigma_{S1}^2+x_i^2|h_{1,i}^r|^2\sigma_i^2},
    \frac{x_i^2P_{S1}|h_{1,i}h_{2,i}^r e^{j\theta_i}|^2}{\sigma_{S2}^2+x_i^2|h_{2,i}^r|^2\sigma_i^2}\right),
  \end{equation}
where $x_i$ is the transmit amplitude. To determine $x_i$'s, we propose
equal-power beamforming for the sum-power constraint case and max-power
beamforming for the individual-power constraint case, which are given
in (\ref{eq: equal power}) and (\ref{eq: max power}), respectively.

\section{Numerical Results}\label{sec:results}
In the section, we present numerical results to quantify the achievable
rate region for the two-way relay network with distributed beamforming.
We assume that the relay cluster consists of 5 nodes; the channel
coefficients $h_{1,i}$ and $h_{1,i}^{r}$, $i=1,\cdots, K$, are
independent CSCG variables with distribution $\mathcal{CN}(0,1)$; the
channel coefficients $h_{2,i}$ and $h_{2,i}^{r}$, $i=1,\cdots, K$, are
also independent and distributed as $\mathcal{CN}(0,1)$. The noises at
the relays and source nodes are assumed to have unit variance in the
simulations. We change $\mu$ from 0 to 1 with step $0.1$ and obtain
$11$ Pareto boundary points. For each point, we run 100 channel
realizations to measure the expected performance. We then do convex
hulling over these points.

\begin{figure}
 \begin{centering}
  \includegraphics[width=0.5\textwidth]{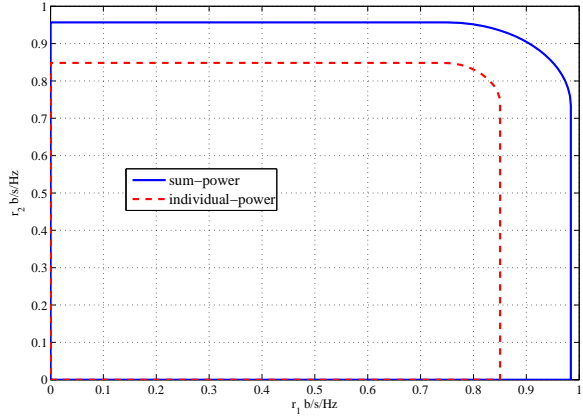}\\
  \caption{Achievable rate regions for the reciprocal channel case, $P_{S1}$=$P_{S2}$=0 dB, $P_R$=10 dB.}\label{fig:regions_rep_fig}
  \end{centering}
\end{figure}
\begin{figure}
 \begin{centering}
  \includegraphics[width=0.5\textwidth]{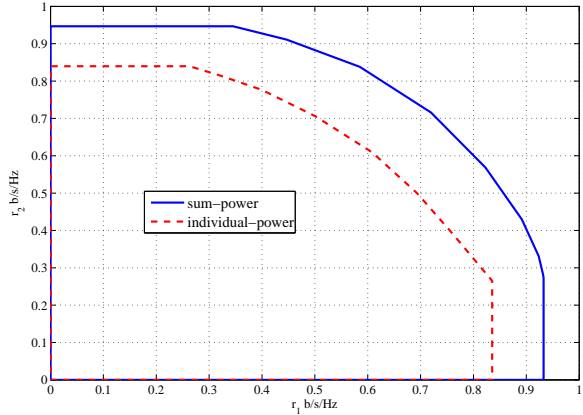}\\
  \caption{Achievable rate regions for the non-reciprocal channel case, $P_{S1}$=$P_{S2}$=0 dB, $\mathbf{P}_R$= [2.5, 3, 0.5, 1, 3] W. }\label{fig:regions_nonrep_fig}
  \end{centering}
\end{figure}

First, we investigate the achievable rate region when channels are
reciprocal, where we set $h_{1,i}=h_{1,i}^r$ and $h_{2,i}=h_{2,i}^r$
for $i=1,\cdots, K$. Fig. \ref{fig:regions_rep_fig} shows the
achievable rate regions with the sum-power constraint and
individual-power constraints, respectively. For the sum-power
constraint case, the relay power $P_R=10$ dB (dB is relative to the
unit noise power) while the transmit powers $P_{S1}=P_{S2}=0$ dB. For
the individual-power constraint case, the relay power constraints are
given as 2.5, 3, 0.5, 1, 3 W, which is summed up to 10 dB. For the
non-reciprocal channel case, Fig. \ref{fig:regions_nonrep_fig} shows
the achievable rate regions with the sum-power and individual-power
constraints, respectively. The powers are the same as the settings in
the reciprocal channel case. We use CVX, a Matlab-based optimization
software \cite{CVX}, to solve the SDP problems. As we see in both Fig.
\ref{fig:regions_rep_fig} and Fig. \ref{fig:regions_nonrep_fig}, due to
the symmetry of the transmit powers and channel statistics, the
achievable rate region $\mathcal{O}$ is symmetric. When $P_{S2}=0$, the
rate pairs collapse to the segment on the horizontal axis, which
corresponds to the achievable rate for a one-way relay network where
only S1 transmits. Moreover, the rate region for the individual-power
constraint case is smaller than that for the sum-power constraint case.
This is quite intuitive since the individual-power constraint is
stricter than the sum-power constraint.
\begin{figure}
 \begin{centering}
  \includegraphics[width=0.5\textwidth]{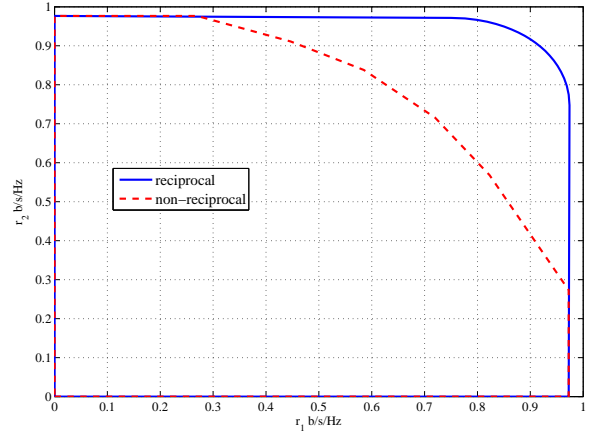}\\
  \caption{Achievable rate regions for the sum-power constraint case, $P_{S1}$=$P_{S2}$=0 dB, $P_R$=10 dB.}\label{fig:regions_sum_fig}
  \end{centering}
\end{figure}

\begin{figure}
 \begin{centering}
  \includegraphics[width=0.5\textwidth]{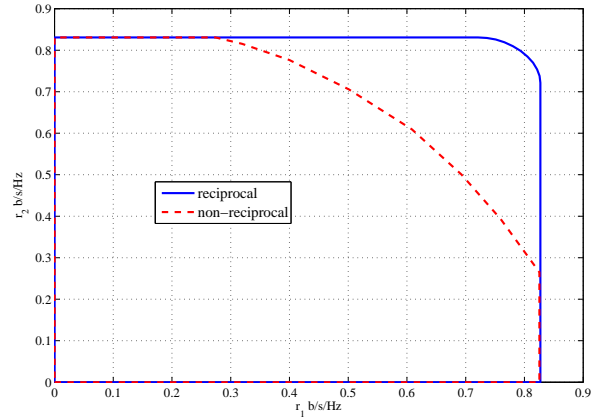}\\
  \caption{Achievable rate regions for the individual-power constraint case,
  $P_{S1}$=$P_{S2}$=0 dB, $\mathbf{P}_R$= [2.5, 3, 0.5, 1, 3] W.}\label{fig:regions_indv_fig}
  \end{centering}
\end{figure}
In Fig. \ref{fig:regions_sum_fig} and Fig. \ref{fig:regions_indv_fig},
we compare the rate regions for the reciprocal and non-reciprocal
channel cases under the same power constraint assumption, where Fig.
\ref{fig:regions_sum_fig} is for the sum-power constraint case and Fig.
\ref{fig:regions_indv_fig} is for the individual-power constraint case.
In both Fig. \ref{fig:regions_sum_fig} and Fig.
\ref{fig:regions_indv_fig}, we can see that the maximum rate for S1 in
the reciprocal channel case is the same as the one in the
non-reciprocal channel case. This is because such a maximum rate is
obtained by optimizing the one-way link from S1 to S2 without
considering the link from S2 to S1. Since the one-way link from S1 to
S2 consists of $\mathbf{h}_1$ and $\mathbf{h}_2^r$, whether
$\mathbf{h}_1^r=\mathbf{h}_1$ or not does not affect the statistics of
the one-way link from S1 to S2. The same argument holds for maximum
rate at S2. We also observe that the rate region for the reciprocal
channel case is larger than that in the non-reciprocal channel case
given the same settings of powers and noises. The reason is that we can
match the beamforming phase to the overall channel phase (i.e., $\angle
\mathbf{w}=\angle \mathbf{h}_1 +\angle \mathbf{h}_2$.) in the
reciprocal channel case, while we are not able to do so in the
non-reciprocal channel case. Therefore, TDD based system is more
favorable in terms of the achievable rate region if the channel
coherence time is larger than one operation period and the
transmit-receive chain calibration \cite{MIMOcom} can be properly done.
Besides the rate region, the amount of information needs to be
broadcast by the control center is significantly different. In the
reciprocal channel case, the control center only needs to broadcast one
scalar at each time slot. However, in the non-reciprocal channel case,
the control center needs to broadcast the beamforming vector, which is
a complex vector of dimension $K$.

\begin{figure}
 \begin{centering}
  \includegraphics[width=0.5\textwidth]{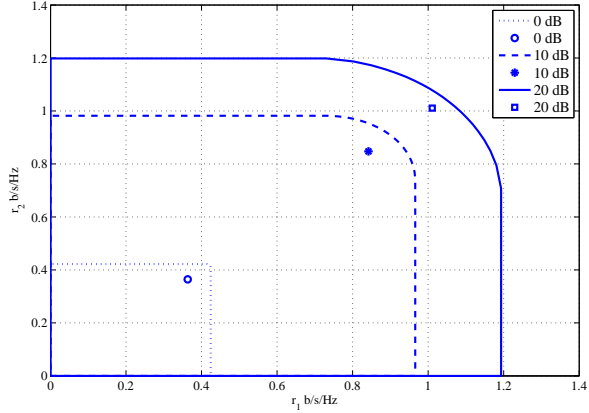}\\
  \caption{Achievable rate regions and equal-power beamforming rates with a sum-power constraint, reciprocal channel,
  $P_{S1}$=$P_{S2}$=0 dB, $P_R$=0, 10, 20 dB.}\label{fig:region vs PR_sum rep}
  \end{centering}
\end{figure}

\begin{figure}
 \begin{centering}
  \includegraphics[width=0.5\textwidth]{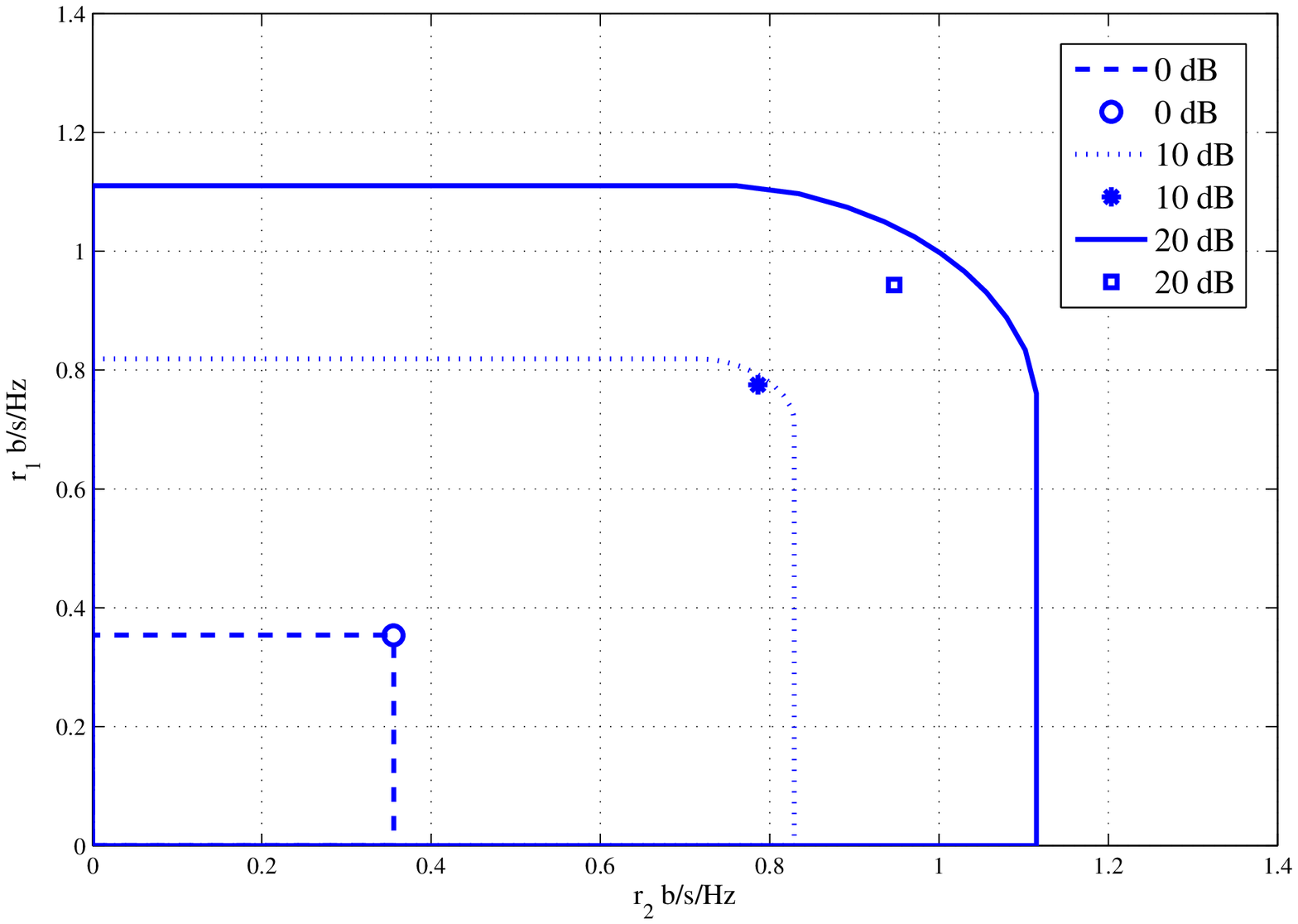}\\
  \caption{Achievable rate regions and max-power beamforming rates with individual-power constraints, reciprocal channel,
  $P_{S1}$=$P_{S2}$=0 dB, total power=0, 10, 20 dB.}\label{fig:region vs PR_indv rep}
  \end{centering}
\end{figure}

\begin{figure}
 \begin{centering}
  \includegraphics[width=0.5\textwidth]{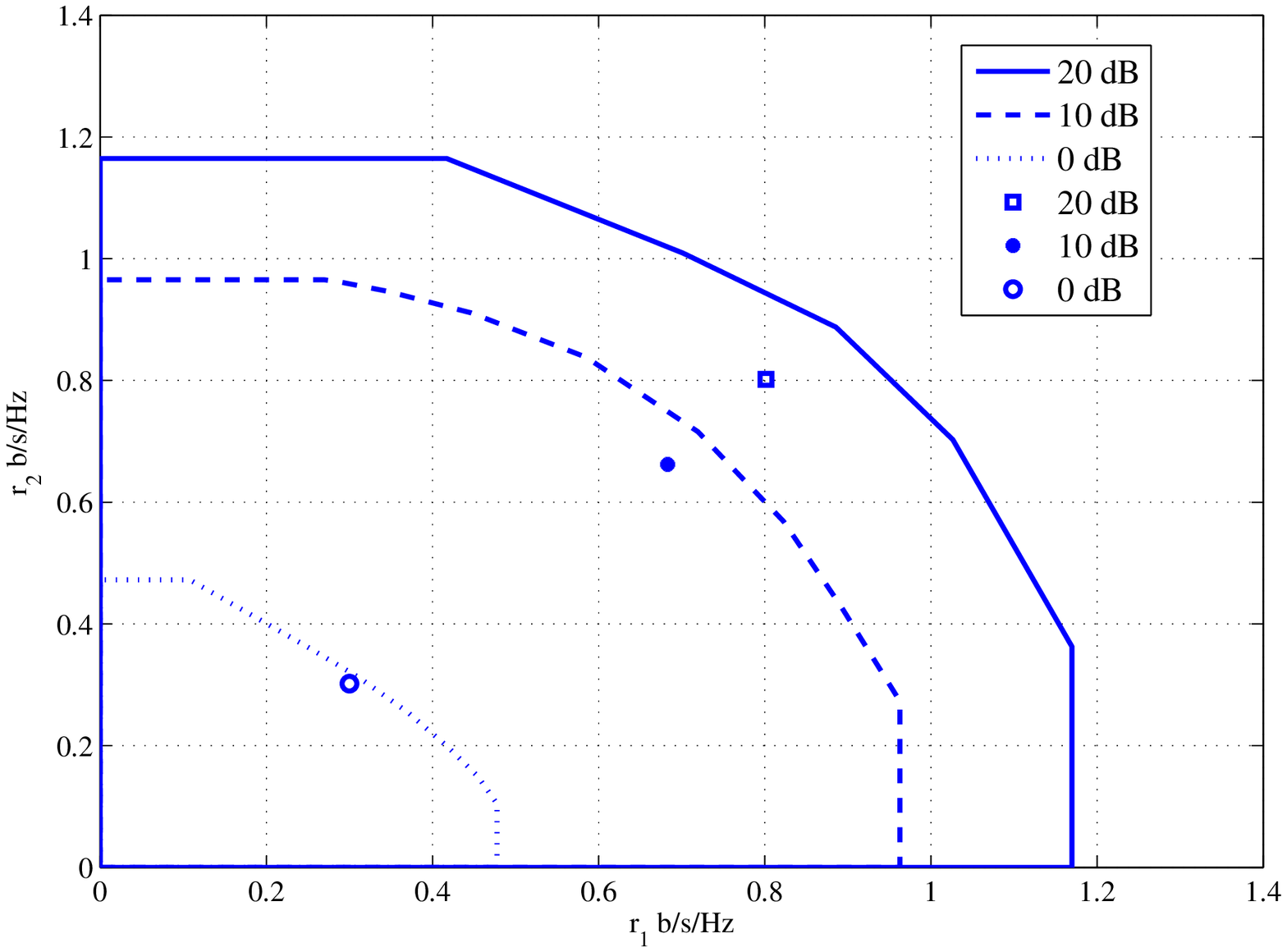}\\
  \caption{Achievable rate regions and equal-power beamforming rates with a sum-power constraint, non-reciprocal channel,
  $P_{S1}$=$P_{S2}$=0 dB, $P_R$=0, 10, 20 dB.}\label{fig:region vs PR_sum nonrep}
  \end{centering}
\end{figure}

\begin{figure}
 \begin{centering}
  \includegraphics[width=0.5\textwidth]{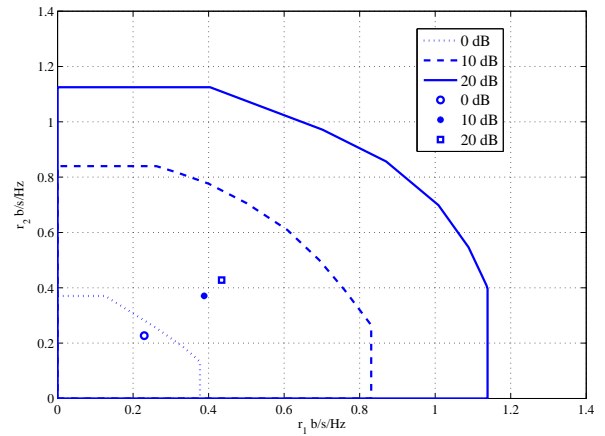}\\
  \caption{Achievable rate regions and maximum-power beamforming rates with individual-power constraints, non-reciprocal channel,
  $P_{S1}$=$P_{S2}$=0 dB, total power=0, 10, 20 dB.}\label{fig:region vs PR_indv nonrep}
  \end{centering}
\end{figure}
At last, we investigate the performance of the sub-optimal schemes in
relative to the maximum achievable rate regions. As we see in Fig.
\ref{fig:region vs PR_sum rep} for the sum-power constraint case with
reciprocal channels, the rate pairs achieved by the equal-power
beamforming scheme, denoted as single points are strictly sub-optimal.
On the contrary, as shown in Fig. \ref{fig:region vs PR_indv rep} for
the individual-power constraint case with reciprocal channels, the rate
pair achieved by max-power beamforming gets closer to the boundary when
the power budget is reduced.\footnote{We set the individual powers
$\mathbf{P}_R= [2.5, 3, 0.5, 1, 3]$ W with total power equal to 10 dB.
When total power is changed to 0 and 20 dB, we scale the vector
proportionally.} In Fig. \ref{fig:region vs PR_sum nonrep} and Fig.
\ref{fig:region vs PR_indv nonrep}, we consider the non-reciprocal
channels and show the performance of equal-power beamforming and
max-power beamforming with greedy phase selection as given in (\ref{eq:
greedy BF angle}). The performance of both equal-power beamforming and
max-power beamforming schemes degrades as $P_R$ increases. Thereby, the
sub-optimal schemes for the non-reciprocal channel case works well only
when $P_R$ is small. In addition, we also evaluate the performance of
equal-power beamforming when channel statistics are asymmetrical, where
we set $h_{1,i}, h_{1,i}^r \sim \mathcal{CN}(0,1)$ and $h_{2,i},
h_{2,i}^r \sim \mathcal{CN}(0,i)$ for $i=1,\cdots,K$. For the
reciprocal channel case, we further set $h_{2,i}=h_{2,i}^r$ and
$h_{2,i}=h_{2,i}^r$. In this case, the equal-power beamforming scheme
as shown in Fig. \ref{fig:region vs PR_sum_rep_asym} and Fig.
\ref{fig:region vs PR_sum_nonrep_asym}, has a larger gap to the
boundary than that in the case where channel statistics are symmetric
(i.e., $h_{1,i}, h_{1,i}^r, h_{2,i}~\text{and}~h_{2,i}^r \sim
\mathcal{CN}(0,1), \text{for}~ i=1,\cdots,K.$), which are shown in Fig.
\ref{fig:region vs PR_sum rep} and Fig. \ref{fig:region vs PR_sum
nonrep} for the reciprocal and non-reciprocal channel cases,
respectively. This is due to the fact that the average transmit powers
for different relays with optimal beamforming should appear non-uniform
if the channel statistics are asymmetric. Therefore, the gain attained
by optimal beamforming becomes more significant in this case.
\begin{figure}
 \begin{centering}
  \includegraphics[width=0.5\textwidth]{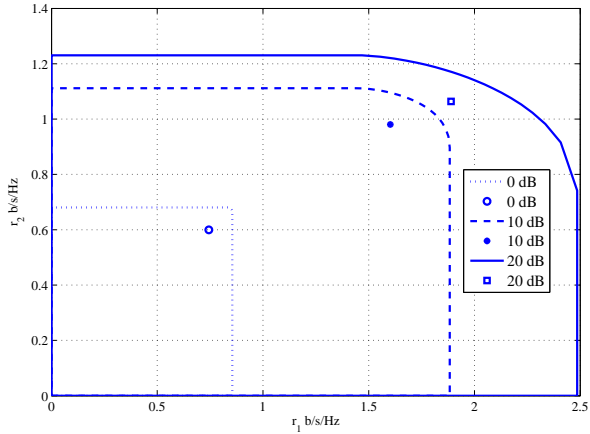}\\
  \caption{Achievable rate regions and equal-power beamforming rates with a sum-power constraint, reciprocal channel,
  $P_{S1}$=$P_{S2}$=0 dB, $P_R$=0, 10, 20 dB.}\label{fig:region vs PR_sum_rep_asym}
  \end{centering}
\end{figure}

\begin{figure}
 \begin{centering}
  \includegraphics[width=0.5\textwidth]{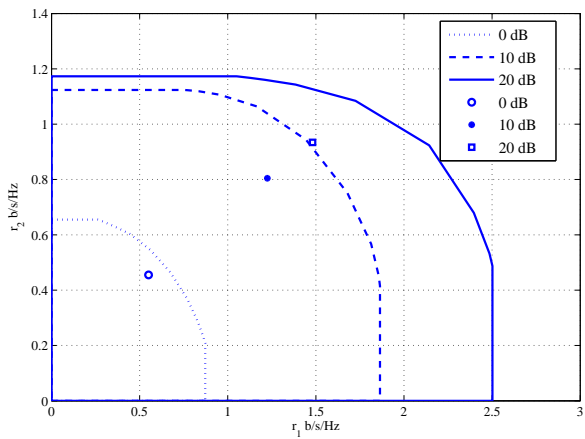}\\
  \caption{Achievable rate regions and equal-power beamforming rates with a sum-power constraint,
  non-reciprocal channel, $P_{S1}$=$P_{S2}$=0 dB, $P_R$=0, 10, 20 dB.}\label{fig:region vs PR_sum_nonrep_asym}
  \end{centering}
\end{figure}

\section{Conclusion}\label{sec: conclusion}
In this paper, we considered the two-way relay networks with
distributed beamforming and investigated the achievable rate region,
which is defined as the convex hull of all achievable rate pairs. We
studied both the reciprocal and non-reciprocal channel cases. In the
reciprocal channel case, we characterized the rate region when the
relay cluster is subject to either a sum-power constraint or
individual-power constraints, respectively. It was shown that we could
characterize the whole achievable rate region via the Pareto-optimal
beamforming vectors obtained from solving a sequence of WSISMin
problems. Furthermore, we derived the closed-form solutions for those
optimal beamforming vectors and consequently proposed partially
distributed algorithms to implement the optimal beamforming, where each
relay node only needs its own local channel information and one global
scalar sent from the control center. For the non-reciprocal channel
case, we used the rate-profile approach to compute the Pareto-optimal
beamforming vectors. When the relay cluster is subject to a sum-power
constraint, we computed the optimal beamforming vector via solving a
sequence of relaxed SDP power minimization problems followed by a
special rank-one reconstruction. When the relay cluster is subject to
individual-power constraints, we solved a sequence of relaxed SDP
feasibility problems and the rank-one solution is obtained by
randomization techniques. From the numerical results, we found that the
achievable rate region is larger in the reciprocal channel case than
that in the non-reciprocal channel case. Hence, TDD-based relaying
scheme is more favorable for the two-way relay network with distributed
beamforming.

\section{Appendices}

\subsection{Proof of Proposition \ref{pp: opt points dual}}\label{proof: pp opt points dual}
\begin{proof}
We will show this by contradiction. Assume $(a,b)\in \mathcal{B}$ but
$\mathcal{U}((a,b))~\overline{\in}~\mathcal{P}$. Then we can find
another point $(c,d)\in \mathcal{R}$ such that $c>1/2\log_2(1+1/a)$ and
$d>1/2\log_2(1+1/b)$. According to the definition of $\mathcal{I}$, the
point $(\frac{1}{2^{2c}-1},\frac{1}{2^{2d}-1})\in \mathcal{I}$. Thus,
there exists a point in $\mathcal{I}$ such that $\frac{1}{2^{2c}-1}< a$
and $\frac{1}{2^{2d}-1}<b$, which contradicts the assumption that
$(a,b)$ is a Pareto optimal point. Hence
$\mathcal{U}(\mathcal{B})\subseteq \mathcal{P}$. The converse that
$\mathcal{U}(\mathcal{B})\supseteq \mathcal{P}$ can also be proven in
the similar way. Therefore, $\mathcal{P}=\mathcal{U}(\mathcal{B})$.
\end{proof}

\subsection{Proof of Proposition \ref{pp: unattainable points unnecessary}}\label{proof: pp unattainable points unnecessary}
\begin{figure*}
\begin{eqnarray}\label{eq: positive 2dtv}
  \frac{dp^2(y)}{dy^2} &=& \frac{d p'(y)/dx}{dy/dx}\nonumber\\
    &=&-\frac{x+x^2}{\ln2}\left[\frac{q^{''}(x)(x+x^2)}{q(x)+q(x)^2}+q^{'}(x) \left(\frac{(1+2x)(q(x)+q(x)^2)-(q^{'}(x)+2q(x)q^{'}(x))(x+x^2)}{q(x)+q(x)^2} \right)\right]> 0.
\end{eqnarray}
\end{figure*}
In order to prove Proposition \ref{pp: unattainable points
unnecessary}, we first introduce the following two lemmas.
\begin{lemma}\label{lma: concave mapping}
Suppose $q(x)$ is a positive, decreasing, and linear function with $x
> 0$. The bijective mapping $\mathcal{U}$ maps $(x,q(x))$ to
$(y,p(y))$; then $p(y)$ is a non-negative, decreasing, and convex
function.
\end{lemma}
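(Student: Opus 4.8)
The plan is to parametrize both coordinates of the image by the free variable $x$ and reduce all three claims to elementary sign checks. Under $\mathcal{U}$ the point $(x,q(x))$ maps to $(y,p)$ with $y(x)=\frac{1}{2}\log_2(1+1/x)$ and $p(x)=\frac{1}{2}\log_2(1+1/q(x))$. Since $y(x)$ is strictly decreasing on the domain $\{x>0:q(x)>0\}$, it is invertible there, so $p$ is a well-defined function of $y$ via $x=x(y)$. Non-negativity is immediate: because $q(x)>0$ we have $1+1/q(x)>1$, hence $p(y)\ge 0$. I also record that ``linear and decreasing'' means $q(x)=c-bx$ with $b>0$, so that $q'(x)=-b<0$ and, crucially, $q''(x)=0$.

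For monotonicity I would compute the two parametric derivatives $\frac{dy}{dx}=-\frac{1}{2\ln2\,(x+x^2)}$ and $\frac{dp}{dx}=-\frac{q'(x)}{2\ln2\,(q+q^2)}$, and then form $p'(y)=\frac{dp/dx}{dy/dx}=\frac{q'(x)(x+x^2)}{q+q^2}$. Since $q'(x)<0$ while $x+x^2>0$ and $q+q^2=q(1+q)>0$, this quotient is strictly negative, so $p$ is decreasing in $y$. (Equivalently one can chain monotonicities: $x$ decreases in $y$, $q$ increases as $x$ decreases, and $\frac{1}{2}\log_2(1+1/q)$ decreases in $q$.)

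The substance of the lemma is convexity, which I would obtain by differentiating $p'(y)$ once more through the parametric rule $p''(y)=\frac{d\,[p'(y)]/dx}{dy/dx}$, producing exactly the expression in (\ref{eq: positive 2dtv}). The key point, and the place where the linearity hypothesis is indispensable, is that the term carrying $q''(x)$ vanishes identically; without it (say for a strictly convex $q$) the negative prefactor $-\frac{x+x^2}{\ln2}$ would multiply a positive $q''$ contribution and could overturn the sign. With $q''=0$ only the second term survives, namely $-\frac{x+x^2}{\ln2}\,q'(x)\,\frac{(1+2x)(q+q^2)-(q'+2qq')(x+x^2)}{q+q^2}$. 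Writing $q'+2qq'=q'(1+2q)$, the numerator of the inner fraction is $(1+2x)(q+q^2)-q'(1+2q)(x+x^2)$, a sum of two strictly positive quantities (the first because $x,q>0$, the second because $-q'>0$). Hence the inner fraction is positive; multiplying it by $q'(x)<0$ and by the negative prefactor $-\frac{x+x^2}{\ln2}$ yields $p''(y)>0$, establishing convexity.

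The only genuine obstacle is the bookkeeping in the double application of the parametric differentiation rule; once the derivatives are assembled, every factor has a determinate sign and no inequality needs to be estimated. I would therefore present the derivative computation compactly, flag the cancellation of the $q''$ term as the single structural reason the hypothesis \emph{linear} (rather than merely positive and decreasing) is required, and then read off the three conclusions.
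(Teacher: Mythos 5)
Your proof is correct and follows essentially the same route as the paper: parametrize $(y,p)$ by $x$, apply the parametric differentiation rule twice, and read off the signs, with the linearity hypothesis entering only through $q''=0$. The only difference is that you spell out the positivity of the second-derivative expression term by term, which the paper merely asserts in its displayed equation.
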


\begin{proof}
Let $y=\log_2(1+1/x)$ and $p(y)=\log_2(1+1/q(x))$ be an implicit
function of $y$, where $x > 0$. Since $q(x)$ is positive, decreasing,
and linear, we have $q(x)>0$, $q'(x)<0$, $q^{''}(x)=0$, and hence
$p(y)\geq0$. The first-order derivative of $p(y)$ is
\begin{eqnarray*}
  p'(y) &=& \frac{d p(y)/dx}{dy/dx} \\
   &=& q'(x)\frac{x+x^2}{q(x)+q(x)^2}<0.
\end{eqnarray*}
The second-order derivative is given by (\ref{eq: positive 2dtv}),
which is positive.

Thus, $p(y)$ is a convex function of $y$.
\end{proof}
\begin{figure}[htbp]
\centering
 \subfigure[A straight line in $\mathcal{I}$.]
 {\includegraphics[width=0.2\textwidth]{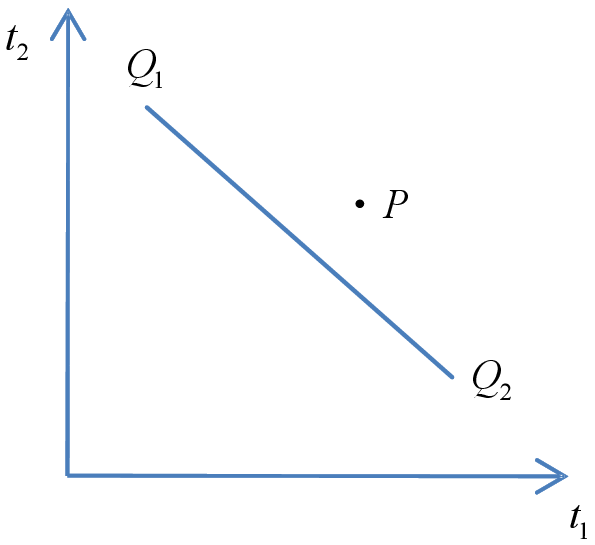}
\label{fig: line2cvx mapping 1}}
 \subfigure[The image of $\overline{Q'_{1}Q'_{2}}$ in $\mathcal{R}$.]
{\includegraphics[width=0.2\textwidth]{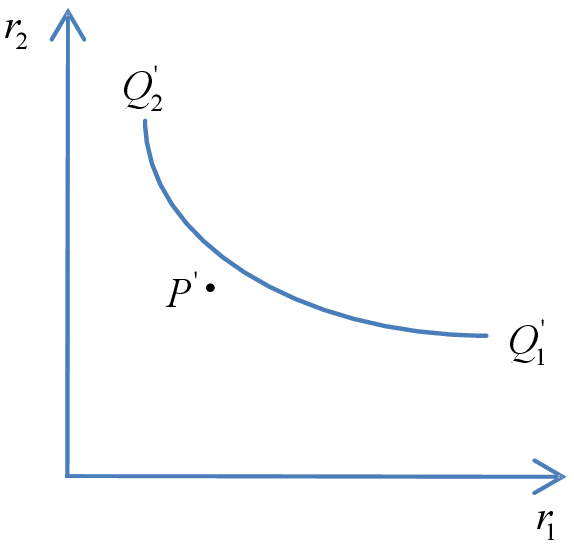} \label{fig:
line2cvx mapping 2}}
 \caption{Illustration of Lemma \ref{lma: concave mapping}. Mapping a straight line in $\mathcal{I}$ to a convex curve in $\mathcal{R}$.}\label{fig: line2cvx mapping}
\end{figure}
According to the above lemma, the line segment $\overline{Q_{1}Q_{2}}$
in Fig. \ref{fig: line2cvx mapping 1} is mapped to a convex curve
$\widehat{Q_{1}^{'}Q_{2}^{'}}$ in Fig. \ref{fig: line2cvx mapping 2} by
$\mathcal{U}$. In addition, it is easy to see that
$\overline{Q_{1}Q_{2}}+\mathbf{K}\mapsto
\widehat{Q_{1}^{'}Q_{2}^{'}}-\mathbf{K}$, i.e., any point above
$\overline{Q_{1}Q_{2}}$ (for example, $P$ in Fig. \ref{fig: line2cvx
mapping 1}) will be mapped to be a point below
$\widehat{Q_{1}^{'}Q_{2}^{'}}$ (i.e., $P^{'}$ in Fig. \ref{fig:
line2cvx mapping 2}).

\begin{lemma}\label{lma: above line}
Let a point $(q_1,q_2)\in \textbf{bd}(\mathcal{I})\setminus
\mathcal{S}(\mu, \mathcal{I})$.  If $q_1=\lambda t_1+\bar{\lambda}s_1$,
where $(t_1,t_2),(s_1,s_2)\in \mathcal{S}(\mu, \mathcal{I})$ and
$(t_1,t_2)\neq(s_1,s_2)$, we have $q_2
> \lambda t_2+\bar{\lambda}s_2$, i.e., the point $(q_1,q_2)$ is above
the line segment connecting $(t_1,t_2)$ and $(s_1,s_2)$.
\end{lemma}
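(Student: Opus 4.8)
The plan is to exploit the defining property of $\mathcal{S}(\mu,\mathcal{I})$: every point in it minimizes the \emph{same} linear objective $\mu t_1+\bar\mu t_2$ over $\mathcal{I}$. Let $c=\min_{(t_1,t_2)\in\mathcal{I}}\mu t_1+\bar\mu t_2$ denote this common optimal value. Since both $(t_1,t_2)$ and $(s_1,s_2)$ lie in $\mathcal{S}(\mu,\mathcal{I})$, they both attain it, so $\mu t_1+\bar\mu t_2=\mu s_1+\bar\mu s_2=c$; geometrically, both endpoints lie on the supporting line $\{(x,y):\mu x+\bar\mu y=c\}$ of $\mathcal{I}$.

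First I would evaluate the objective at the point of the connecting segment whose first coordinate equals $q_1$, namely $(q_1,\lambda t_2+\bar\lambda s_2)=(\lambda t_1+\bar\lambda s_1,\lambda t_2+\bar\lambda s_2)$. By linearity, $\mu(\lambda t_1+\bar\lambda s_1)+\bar\mu(\lambda t_2+\bar\lambda s_2)=\lambda c+\bar\lambda c=c$, so this segment point also lies on the supporting line. Next I would combine the membership $(q_1,q_2)\in\textbf{bd}(\mathcal{I})\subseteq\mathcal{I}$ (using that $\mathcal{I}$ is closed, so its boundary is feasible) with the exclusion $(q_1,q_2)\notin\mathcal{S}(\mu,\mathcal{I})$. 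Feasibility gives $\mu q_1+\bar\mu q_2\geq c$, and since $(q_1,q_2)$ is \emph{not} a minimizer the inequality is strict: $\mu q_1+\bar\mu q_2> c$. Substituting $c=\mu q_1+\bar\mu(\lambda t_2+\bar\lambda s_2)$ from the previous step and cancelling the common term $\mu q_1$ yields $\bar\mu q_2>\bar\mu(\lambda t_2+\bar\lambda s_2)$, and dividing by $\bar\mu>0$ gives exactly $q_2>\lambda t_2+\bar\lambda s_2$.

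The hard part is not any computation but justifying two delicate points. The first is the strictness of $\mu q_1+\bar\mu q_2>c$: this hinges precisely on $(q_1,q_2)$ being excluded from the minimizer set $\mathcal{S}(\mu,\mathcal{I})$, since without that hypothesis one could only conclude $q_2\geq\lambda t_2+\bar\lambda s_2$. The second is the requirement $\bar\mu>0$, needed so that the final division preserves the strict inequality.

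Finally I would dispose of the degenerate weight $\bar\mu=0$ (i.e.\ $\mu=1$) by a vacuity remark rather than a separate argument: in that case all minimizers of $\mu t_1+\bar\mu t_2=t_1$ share the same first coordinate, so $q_1=\lambda t_1+\bar\lambda s_1$ would equal that minimal value, forcing $(q_1,q_2)$ to itself be a minimizer and contradicting $(q_1,q_2)\notin\mathcal{S}(1,\mathcal{I})$. Hence the hypotheses cannot be met when $\mu=1$, and I would carry out the main argument under $0\leq\mu<1$, where $\bar\mu>0$ holds automatically.
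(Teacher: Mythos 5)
Your proof is correct and follows essentially the same route as the paper's: both arguments rest on the observation that the two endpoints (and hence any convex combination of them) attain the common minimum $m$ of the weighted sum $\mu t_1+\bar\mu t_2$ over $\mathcal{I}$, while the excluded boundary point is feasible but not a minimizer, forcing its objective value strictly above $m$; the paper merely packages this as a contradiction (assume $q_2\leq\lambda t_2+\bar\lambda s_2$ and derive $\mu q_1+\bar\mu q_2\leq m$) where you argue directly. Your explicit handling of the degenerate weight $\bar\mu=0$ and of the closedness of $\mathcal{I}$ is slightly more careful than the paper's, but does not change the substance of the argument.
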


\begin{proof}
We show this by contradiction. Suppose $\mathcal{S}(\mu, \mathcal{I})$
has more than one elements for a given $\mu$, such that $(t_1,t_2),
(s_1,s_2)\in \mathcal{S}(\mu, \mathcal{I})$, and $(t_1,t_2)\neq
(s_1,s_2)$. According to the definition of $\mathcal{S}(\mu,
\mathcal{I})$ given by (\ref{eq:set mu WS}), we have $\mu t_1+
\bar{\mu} t_2=\mu s_1+ \bar{\mu} s_2=m$, where $m$ is the minimum value
of the weighted sum for a given $\mu$ over all points in $\mathcal{I}$.
If $q_1=\lambda t_1+\bar{\lambda}s_1$ and $q_2 \leq \lambda
t_2+\bar{\lambda}s_2$, we have
\begin{eqnarray}
  \mu q_1+\bar{\mu}q_2 &\leq& \mu(\lambda
    t_1+\bar{\lambda}s_1)+\bar{\mu}(\lambda
    t_2+\bar{\lambda}s_2) \\
   &=& \lambda (\mu t_1+\bar{\mu}t_2)+\bar{\lambda} (\mu s_1+\bar{\mu}s_2) \\
   &=& m.
\end{eqnarray}
If $\mu q_1+\bar{\mu}q_2< m$, it contradicts that $m$ is the minimum
value of the weighted sum for the given $\mu$; If $\mu
q_1+\bar{\mu}q_2= m$, it contradicts that $(q_1,q_2)$ is not in
$\mathcal{S}(\mu, \mathcal{I})$. Therefore, the lemma holds.
\end{proof}

According to the above lemma, for a given $\mu$, if
$\mathcal{S}(\mu,\mathcal{I})$ has more than one elements,  the set of
boundary points $\{(q_1,q_2): (q_1,q_2)\in
\textbf{bd}(\mathcal{I})\setminus \mathcal{S}(\mu,\mathcal{I}),
s_1<q_1<t_1, (s_1, s_2), (t_1, t_2)\in \mathcal{S}(\mu,\mathcal{I})\}$
must be above the line segment connecting $(s_1, s_2)$ and $(t_1,
t_2)$; and hence are not attainable by solving WSISMin. This is true
for all $\mu$'s; hence \emph{if a boundary point is not attainable by
solving WSISMin, it must be above a line segment connecting two
particular points in $\mathcal{\mathcal{S}(\mu,\mathcal{I})}$ for some
$\mu$}. With the above two lemmas, we are ready to prove Proposition
\ref{pp: unattainable points unnecessary} as follows.

\underline{Proof of Proposition \ref{pp: unattainable points
unnecessary}}:
\begin{proof}
\begin{figure}[htbp]
\centering
  \subfigure[Inverse-SNR region]
{\includegraphics[width=0.2\textwidth]{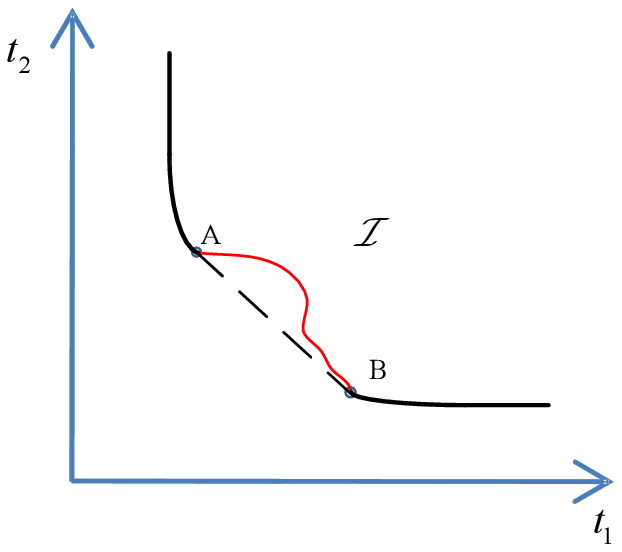}
\label{fig: inverseSNR_curves}} \subfigure[Rate region]
 {\includegraphics[width=0.2\textwidth]{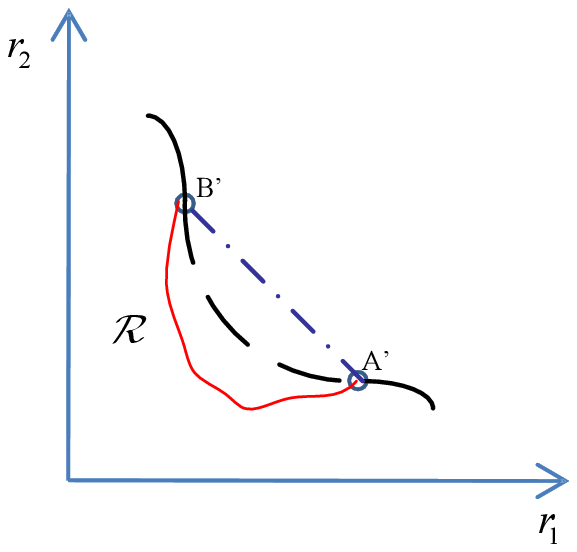}
\label{fig: rate_curves}}
 \caption{Inverse-SNR region and corresponding rate and  region.}\label{fig: mapping curves}
\end{figure}
First we define
\begin{equation}
    \Delta=\{\mu: \mathcal{S}(\mathcal{\mu, I}) \text{ has more than one elements}\},
\end{equation}
and let $l_{\mu}$ be the line segment (e.g., $\overline{AB}$  in Fig.
\ref{fig: inverseSNR_curves})
 with two end points from $\mathcal{S}(\mathcal{\mu, I})$  for $\mu\in \Delta$
(e.g.,  points $A$ and $B$  in \ref{fig: inverseSNR_curves}). According
to Lemma \ref{lma: above line}, the boundary points that are not
attainable by solving WSISMin, denoted as
$\textbf{bd}(\mathcal{I})\setminus\mathcal{S}(\mathcal{I})$ (here
referring to curve $\widehat{AB}$ in Fig. \ref{fig:
inverseSNR_curves}), must be above $l_{\mu}$'s, i.e.,
$\textbf{bd}(\mathcal{I})\setminus\mathcal{S}(\mathcal{I}) \subseteq
\bigcup_{\mu\in \Delta} (l_{\mu}+\mathbf{K})$; and it follows that
$\mathcal{U}(\textbf{bd}(\mathcal{I})\setminus
\mathcal{S}(\mathcal{I})) \subseteq \mathcal{U}(\bigcup_{\mu\in \Delta}
(l_{\mu}+\mathbf{K})) $. According to Lemma \ref{lma: concave mapping},
$\mathcal{U}(\bigcup_{\mu\in \Delta} (l_{\mu}+\mathbf{K}))\subseteq
\bigcup_{\mu\in \Delta} ( \mathcal{U}(l_{\mu})-\mathbf{K})$, where
$\mathcal{U}(l_{\mu})$ is a convex curve (e.g., here
$\mathcal{U}(l_{\mu})$ refers to the dashed convex curve
$\widehat{A'B'}$ in Fig. \ref{fig: rate_curves}). Let $\tilde{l}_{\mu}$
be a line segment (i.e., the dot-dashed line segment $\overline{A'B'}$
in Fig. \ref{fig: rate_curves}) that connects the two end points of the
convex curve $\mathcal{U}(l_{\mu})$. Due to the convexity of
$\mathcal{U}(l_{\mu})$, we have $\mathcal{U}(l_{\mu})-\mathbf{K}
\subseteq \tilde{l}_{\mu}-\mathbf{K}$ and hence
$\mathcal{U}(\textbf{bd}(\mathcal{I})\setminus\mathcal{S}(\mathcal{I}))
\subseteq \bigcup_{\mu\in \Delta} ( \tilde{l}_{\mu}-\mathbf{K})$.
Notice $\bigcup_{\mu\in \Delta} ( \tilde{l}_{\mu}+\mathbf{K})$ is
sufficient for constructing $\mathcal{O}$ by convex hulling. Therefore,
$\mathcal{U}(\textbf{bd}(\mathcal{I})\setminus\mathcal{S}(\mathcal{I}))$
or $\textbf{bd}(\mathcal{I})\setminus\mathcal{S}(\mathcal{I})$ is not
necessary for constructing $\mathcal{O}$. Since $\mathcal{B} \subseteq
\textbf{bd}(\mathcal{I})$, the set
$\mathcal{B}\setminus\mathcal{S}(\mathcal{I})$ is also not necessary
for constructing $\mathcal{O}$.
\end{proof}

\label{sec:ref}

\bibliographystyle{IEEEtran}
\bibliography{strings}

\end{document}